\title{On Rational Recursive Sequences}
\author{Lorenzo Clemente}{University of Warsaw, Poland}{clementelorenzo@gmail.com}{https://orcid.org/0000-0003-0578-9103}{partially supported by the ERC grant INFSYS, agreement no. 950398.}
\author{Maria Donten-Bury}{University of Warsaw, Poland}{m.donten@mimuw.edu.pl}{https://orcid.org/0000-0002-2138-4109}{partially supported by the National Science Center, Poland, project 2017/26/E/ST1/00231.}
\author{Filip Mazowiecki}{University of Warsaw, Poland}{f.mazowiecki@mimuw.edu.pl}{}{partially supported by the ERC grant INFSYS, agreement no. 950398.}
\author{Micha\l{} Pilipczuk}{University of Warsaw, Poland}{michal.pilipczuk@gmail.com}{}{partially supported by the ERC grant BOBR, agreement no. 948057.}
\authorrunning{L. Clemente, M. Donten-Bury, F. Mazowiecki and M. Pilipczuk} 
\keywords{recursive sequences, polynomial automata, zeroness problem, equivalence problem} 
\newcommand*{\eg}{e.g.\@\xspace}
\newcommand*{\ie}{i.e.\@\xspace}
\newcommand{\bu}{\mathbf{u}}
\newcommand{\bv}{\mathbf{v}}
\newcommand{\bx}{\mathbf{x}}
\newcommand{\bc}{\mathbf{c}}
\newcommand{\bd}{\mathbf{d}}
\newcommand{\bff}{\mathbf{f}}
\newcommand{\bF}{\mathbf{F}}
\newcommand{\cA}{\mathcal{A}}
\newcommand{\cB}{\mathcal{B}}
\newcommand{\bbN}{\mathbb{N}}
\newcommand{\N}{\bbN}
\newcommand{\bbQ}{\mathbb{Q}}
\newcommand{\bbR}{\mathbb{R}}
\newcommand{\bbD}{\mathbb{D}}
\newcommand{\bbE}{\mathbb{E}}
\newcommand{\bbF}{\mathbb{F}}
\newcommand{\bbG}{\mathbb{G}}
\renewcommand{\leq}{\leqslant}
\renewcommand{\geq}{\geqslant}
\renewcommand{\le}{\leqslant}
\renewcommand{\ge}{\geqslant}
\newcommand{\set}[1]{\{ #1 \}}
\newcommand{\sem}[1]{\left\llbracket #1 \right\rrbracket}
\newcommand{\trdeg}{\operatorname{tr\,deg}}
\newcommand{\PTIME}{\textsf{PTIME}\xspace}
\newcommand{\NPTIME}{\textsf{NP}\xspace}
\newcommand{\PSPACE}{\textsf{PSPACE}\xspace}
\newcommand{\NC}[1]{$\textsf{NC}^{\text{#1}}$\xspace}
\newcommand{\QBF}{\textsf{QBF}\xspace}
\newcommand{\michal}[1]{\todo[inline, color=blue!30]{{\bf Mi:} #1}}
\newcommand{\lorenzo}[1]{\todo[inline, color=orange!30]{{\bf L:} #1}}
\crefname{equation}{equation}{equations}
\crefname{lemma}{Lemma}{Lemmas}
\crefname{theorem}{Theorem}{Theorems}
\crefname{claim}{Claim}{Claims}
\crefname{section}{Section}{Sections}
\crefname{definition}{Definition}{Definitions}
\crefname{proposition}{Proposition}{Propositions}
\crefname{corollary}{Corollary}{Corollaries}
\begin{document}
\maketitle

\begin{abstract}
We study the class of rational recursive sequences (ratrec) over the rational numbers.
A ratrec sequence is defined via a system of sequences using mutually recursive equations of depth 1,
where the next values are computed as rational functions of the previous values.
An alternative class is that of simple ratrec sequences,
where one uses a single recursive equation, however of depth $k$: the next value is defined as a rational function of $k$ previous values.

We conjecture that the classes ratrec and simple ratrec coincide. The main contribution of this paper is a proof of a variant of this conjecture where the initial conditions are treated symbolically, using a formal variable per sequence, while the sequences themselves consist of rational functions over those variables. While the initial conjecture does not follow from this variant, we hope that the introduced algebraic techniques may eventually be helpful in resolving the problem.

The class ratrec strictly generalises a well-known class of polynomial recursive sequences (polyrec). These are
defined like ratrec, but using polynomial functions instead of rational ones.
One can observe that if our conjecture is true and effective, then we can improve the complexities of the zeroness and the equivalence problems for polyrec sequences. Currently, the only known upper bound is Ackermanian, which follows from results on polynomial automata.
We complement this observation by proving a~\PSPACE lower bound for both problems for polyrec. Our lower bound construction also implies that the Skolem problem is \PSPACE-hard for the polyrec class.
\end{abstract}

\newpage

\section{Introduction}
\label{sec:introduction}
The topic of this paper are recursively defined sequences of rational numbers $\bbN \to \bbQ$.
There are two natural ways to define such sequences.
In a \emph{simple recursion of depth $k$} one fixes $k$ initial values and defines the next value as a function of the previous $k$ values.
This is how the Fibonacci sequence is usually defined (with $k = 2$):
$f_0 = 0$, $f_1 = 1$, and $f_{n+2} = f_{n+1} + f_n$.
%
In a \emph{mutual recursion of width~$k$} one defines a system of $k$ sequences such that every sequence has its initial value and the update function can access the immediately previous value of all $k$ sequences, but no older value.
For example, we can define $a_n = n^2$ with an extra sequence $b_n = n$ as follows:
$a_0=b_0=0$ and $a_{n+1} = a_n + 2b_n + 1$, $b_{n+1} = b_n +1$.
Both styles allow to define various classes of sequences depending on what operations are allowed in the equations,
and in general mutual recursion of width $k$ can simulate simple recursion of depth $k$
(by adding sufficiently many auxiliary sequences).
%

One of the most well-known classes of sequences is the class of \emph{linear recursive sequences}, 
which is obtained by allowing the update function to use addition and multiplication with constants.
These are usually defined with a simple recursion, like in the Fibonacci example,
but in fact, as a consequence of the Cayley-Hamilton theorem, one obtains the same class when using mutual recursion~\cite[Lemma 1.1]{HalavaHarjuHirvensaloKarhumaki:techrep:2005}.
In particular, all the example sequences $f_n$, $a_n$ and $b_n$ are linear recursive.

Another natural class of sequences are the \emph{polynomial recursive sequences} (\emph{polyrec}), 
which are defined with mutual recursion and updates from the ring of polynomial functions $\bbQ[x_1, \dots, x_k]$.
An example sequence from this class is $c_n = n!$, where one can use the already defined sequence $b_n$ and define $c_0 = 1$ and $c_{n+1} = c_n \cdot (b_n + 1)$. To see the polynomials behind this definition, let $x$ and $y$ be variables corresponding to $b_n$ and $c_n$, respectively. The polynomial to define $b_{n+1}$ is $P_b(x,y) = x + 1$, and the polynomial to define $c_{n+1}$ is $P_c(x,y) = y(x+1)$.
The class of \emph{simple polynomial recursive sequences} is obtained by using polynomial updates and a simple recursion (instead of mutual recursion) 
and it is known to be strictly included in the class of all polyrec sequences.
In particular, the sequence $c_n$ is polyrec but not simple polyrec~\cite[Theorem 3.1]{cadilhac2021polynomial}.

The definition via mutual recursion appears in the area of control theory
(under the name \emph{implicit representation} of the space of states),
and, in computer science, in the context of \emph{weighted automata} over $\bbQ$.
Such automata output a rational number for every word over a finite alphabet $\Sigma$,
and they are defined by linear updates~\cite{DrosteHWA09}.
Linear recursive sequences are thus equivalent to weighted automata with a $1$-letter alphabet $\Sigma = \set a$~\cite{BarloyFLM20}.
Similarly, polyrec sequences are equivalent to \emph{polynomial automata}~\cite{BenediktDuffSharadWorrell:PolyAut:LICS:2017} (also known as \emph{cost-register automata}~\cite{AlurDDRY13}) with a $1$-letter alphabet~\cite{cadilhac2021polynomial}.

We are interested in two classical decision problems for such automata.
\emph{Equivalence}: Given two automata $\cA$ and $\cB$ do they output the same number for every word, and \emph{zeroness}: Does the input automaton $\cA$ output $0$ for every word.
These problems are well-known to be efficiently equivalent to each other:
Zeroness is clearly a special case of equivalence (just take $\cB$ to output zero for every word),
and equivalence of $\cA, \cB$ reduces to zeroness of the difference automaton $\cA - \cB$ with the expected semantics.
Therefore, we will consider only the zeroness problem.
From the seminal work of Sch\"{u}tzenberger on minimisation of weighted automata
it follows that the zeroness problem for weighted automata is in \PTIME~\cite{Schutzenberger:IC:1961}
(in fact even in \NC 2~\cite{Tzeng:IPL:1996}).
For polynomial automata over a binary alphabet, zeroness is known to be Ackermann-complete~\cite{BenediktDuffSharadWorrell:PolyAut:LICS:2017}. 
Using the connection between sequences and automata one immediately obtains \NC 2 and Ackermann upper bounds for the zeroness problem of linear recursive sequences, resp., polyrec sequences.

Let us take a closer look at the zeroness problem for recursive sequences,
\ie, given a sequence $u_n$ is it the case that $u_n = 0$ for all $n \in \N$?
The zeroness problem is a fundamental problem for number sequences.
It is a basic building block in computer algebra, e.g.,
in proving identities involving recursively defined sequences.
It is also important from a theoretical point of view as a yardstick of the well-behavedness of classes of number sequences,
i.e., interesting classes of sequences should at least have a decidable zeroness problem.
The difficulty of solving the zeroness problem in general depends on how the sequence is presented.
If the sequence is defined with a simple recursion of depth $k$ such as $u_{n+k} = f(u_{n+k-1}, \dots, u_n)$,
then zeroness trivially reduces to checking that the first $k$ values are $0$
and that the recursive update $f$ is well-defined and needs to output $0$ when the previous values are $0$,
\ie, $f(0, \dots, 0) = 0$.
However, this simple reasoning is flawed in the case of mutual recursion,
because the auxiliary sequences employed in the mutual recursion need not be zero.
However, for linear recursive sequences the zeroness problem is easily solved
even in the case of mutual recursion,
because the reduction to simple recursion \cite[Lemma 1.1]{HalavaHarjuHirvensaloKarhumaki:techrep:2005}
implies that $a_n$ is zero if, and only if, its first $k+1$ values $a_0 = \cdots = a_k$ are zero.
For polyrec sequences we cannot apply this argument
since mutual recursion cannot be simulated by simple recursion in the case of polynomial updates.


\subparagraph*{Our results}
In this paper we introduce the class of \emph{rational recursive sequences} (\emph{ratrec}). 
This class is defined with mutual recursion and updates from the field of rational functions $\bbQ(x_1, \dots, x_k)$.
For example, the Catalan numbers $C_{n+1} = \frac{2(2n+1)}{n+2} C_n$
can be defined using $b_n$ as an auxiliary sequence. Namely, $C_0 = 1$ and $C_{n+1} = \frac{2(2b_n + 1)}{b_n+2}C_n$, where the rational function used to define $C_{n+1}$ is $R(x,y) = \frac{2(2x+1)}{x+2}y$.
By definition, the class of polyrec sequences is included in the class of ratrec sequences,
and in fact the inclusion is strict as witnessed by the fact that the Catalan numbers $C_n$ are not polynomialy recursive~\cite[Corollary~4.1]{cadilhac2021polynomial}.
Moreover, ratrec sequences also include the well-known and wide-spread \emph{P-recursive sequences}%
\footnote{
Sometimes P-recursive sequences are also called \emph{holonomic sequences},
due to a connection with holonomic generating functions.
}~\cite{KauersP11},
which according to a 2005 estimate comprise at least 25\% of the OEIS archive \cite{Salvy:ISAAC:2005}.


A natural question is whether the class of ratrec sequences semantically collapes to the class of \emph{simple rational recursive sequences} obtained by adopting simple recursion.
Unlike in the case of polynomial updates, we conjecture that for rational updates we do have such a~collapse.

\begin{restatable}{conjecture}{thmMain}
    \label{thm:main}
    The class of rational recursive sequences coincides with the class of simple rational recursive sequences.
\end{restatable}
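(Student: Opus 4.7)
The plan is to work in the symbolic setting advocated in the abstract: replace the initial data by formal variables $\x_1, \ldots, \x_k$, so that every state lives in the purely transcendental field $K = \bbQ(\x_1, \ldots, \x_k)$. Writing $R = (R_1, \ldots, R_k)$ for the vector of rational updates, the iterates $u_n = R^n(\x_1, \ldots, \x_k) \in K^k$ are all well-defined, since no nonzero polynomial in $\x_1, \ldots, \x_k$ can vanish on the algebraically independent variables themselves. Fix a coordinate $i \in \{1, \ldots, k\}$; the goal is to exhibit a rational function $f$ and a depth $d$ such that $x_i(n+d) = f(x_i(n), \ldots, x_i(n+d-1))$ for every $n \geq 0$, where $x_i(n)$ denotes the $i$-th component of $u_n$.

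The central tool is the ascending chain of subfields $L_n = \bbQ(x_i(0), x_i(1), \ldots, x_i(n)) \subseteq K$. Since $\trdeg(K/\bbQ) = k$, the transcendence degrees $\trdeg(L_n/\bbQ)$ are eventually constant, say from index $n_0$ onwards. For $n \geq n_0$ every $L_n$ is contained in $A$, the relative algebraic closure of $L_{n_0}$ inside $K$. Because algebraic subextensions of a finitely generated field extension are finite, $[A : L_{n_0}] < \infty$. Any strict ascent in a chain of intermediate subfields at least doubles the degree over $L_{n_0}$, so the chain $(L_n)_{n \geq n_0}$ must stabilise: there exists $n_1 \geq n_0$ with $L_{n_1+1} = L_{n_1}$. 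Equivalently, $x_i(n_1+1) = f(x_i(0), \ldots, x_i(n_1))$ for some rational function $f \in \bbQ(y_0, \ldots, y_{n_1})$.

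To upgrade this single identity to a recursion valid at every time step, I would appeal to the time-invariance of $R$: informally, substituting $\x_j \mapsto x_j(m)$ should yield $x_i(n_1+1+m) = f(x_i(m), \ldots, x_i(n_1+m))$ for every $m \geq 0$. This shift step is what I expect to be the main technical obstacle. The substitution need not extend to a well-defined endomorphism of $K$, because the components $x_1(m), \ldots, x_k(m)$ may be algebraically dependent (the iterate $R^m$ need not be dominant), so a denominator in $f$ could vanish on them. I would circumvent this either by first passing to the eventual image variety $V_\infty = \bigcap_m \overline{R^m(\mathbb{A}^k)}$, which stabilises by Noetherianity and on whose function field the induced action of $R$ is dominant; or by clearing denominators in the identity above and propagating the resulting \emph{polynomial} identity under substitution, which is unconditionally well-defined on polynomial expressions.

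Finally, bridging from the symbolic variant to the original statement over $\bbQ$ introduces a further difficulty: the rational function $f$ produced above may have a denominator vanishing at the specific initial assignment of interest, even though the mutual recursion itself behaves well there. I expect this last step to require a finite cover by simple recursions, one per stratum of the singular locus of $f$, together with a specialisation argument ensuring that every concrete initial condition is handled by some recursion in the cover.
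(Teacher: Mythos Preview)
The statement you are addressing is stated in the paper as a \emph{conjecture}, not a theorem; the paper does not prove it. What the paper does prove is exactly the symbolic variant you outline in your first two paragraphs, formalised there as \cref{thm:simple:ratrec}, and your argument for that part coincides with the paper's: the ascending chain of subfields, stabilisation first of the transcendence degree and then of the algebraic degree inside a finitely generated extension, followed by a shift argument. Your concern about the shift step is resolved in the paper precisely by your second suggested fix --- clear denominators and propagate the resulting polynomial identity --- which is the content of \cref{lem:substitution}. So for the symbolic theorem your proposal is correct and matches the paper.

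The genuine gap is your final paragraph, and the paper is explicit that this is the obstruction to the full conjecture. Your specialisation idea does not go through as written: in \cref{subsec:obstacles} the paper exhibits, for every $d$, the polyrec sequence $u_n = n(n-1)\cdots(n-d+1)$ whose symbolic simple recursion is
\[
R(y_0,y_1) \;=\; y_1\cdot\frac{(d+1)\,y_1 - y_0}{\,y_1 + (d-1)\,y_0\,},
\]
with $R(0,0)$ undefined, while $u_0=\cdots=u_{d-1}=0$ and $u_d=d!\neq 0$. Thus the denominator of the symbolic recursion can vanish on the concrete orbit for arbitrarily many steps, so no specialisation of a single $R$ of bounded depth settles the matter. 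Your ``finite cover by simple recursions, one per stratum of the singular locus'' is a reasonable heuristic, but it is not an argument: you would need uniform finiteness of the strata and, for each stratum, a simple recursion whose denominator avoids it --- neither is established, and the paper leaves the conjecture open for exactly this reason.
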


To see the power of ratrec sequences recall that $c_n = n!$ is not a simple polyrec sequence.
However, when in the recursion we allow rational functions, then $c_n$ can be defined with a simple recursion, namely: $c_{n+2} = \frac{(c_{n+1})^2}{c_n} + c_{n+1}$.
Thus $c_n$ is simple ratrec.


We introduce a technique towards proving~\cref{thm:main}, which comes from commutative algebra. Instead of looking at the elements of a ratrec sequence as numbers in the field of rationals $\bbQ$,
we symbolically view them as elements of the field of rational functions $\bbQ(x_1, \dots, x_k)$. More precisely, we assume that the sequences $\bF^{(1)}, \ldots, \bF^{(k)}$ are initialised by setting $F_0^{(i)} = x_i$ for all $i\in \{1,\ldots,k\}$; then, a system of recursive equations governed by rational functions defines further entries of the sequences. Thus, the recursive definition will output elements in $\bbQ(x_1, \dots, x_k)$ rather than $\bbQ$. Intuitively, this corresponds to treating the initial conditions of a system of ratrec sequences symbolically, rather than instantiating them with actual rational values.

Informally speaking, we prove \cref{thm:main} for symbolic ratrec sequences, as explained above.
Here is a semi-formal statement of our main result, see \cref{thm:simple:ratrec} for a formalization.

\begin{theorem}\label{thm:main-informal}
    The class of rational recursive sequences over $\bbQ(x_1,\ldots,x_k)$, with the system initialised by $F_0^{(i)} = x_i$, coincides with the class of simple rational recursive sequences.
\end{theorem}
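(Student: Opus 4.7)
The plan is to view the sequence algebraically and apply a shift argument. Fix an index $i \in \{1, \ldots, k\}$ and set $u_n := F_n^{(i)} \in \bbQ(x_1, \ldots, x_k)$. Let $R = (R_1, \ldots, R_k)$ be the $k$-tuple of rational functions driving the mutual recursion, and let $\sigma$ denote the substitution $x_j \mapsto R_j(x_1, \ldots, x_k)$ acting on $\bbQ(x_1, \ldots, x_k)$. Since the state vector $\bv_n := (F_n^{(1)}, \ldots, F_n^{(k)})$ satisfies $\bv_0 = (x_1, \ldots, x_k)$ and $\bv_{n+1} = R(\bv_n)$, an unwinding of the definition yields the crucial shift identity $\sigma(u_n) = u_{n+1}$ for every $n \geq 0$.

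The heart of the argument is to show that the subfield $\bbK := \bbQ(u_0, u_1, u_2, \ldots) \subseteq \bbQ(x_1, \ldots, x_k)$ is finitely generated over $\bbQ$. For this I would invoke the classical theorem that every intermediate field of a finitely generated field extension is itself finitely generated; the proof picks a transcendence basis $t_1, \ldots, t_d$ of $\bbK/\bbQ$ inside $\bbK$ and observes that $\bbK$ then lies inside the algebraic closure of $\bbQ(t_1, \ldots, t_d)$ in $\bbQ(x_1, \ldots, x_k)$, which is finite by general properties of finitely generated field extensions. Since any finite generating set of $\bbK$ involves only finitely many of the $u_n$'s, one obtains an integer $m$ with $\bbK = \bbQ(u_0, u_1, \ldots, u_m)$.

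Given such $m$, the element $u_{m+1} \in \bbK$ has a representation $u_{m+1} = \phi(u_0, \ldots, u_m)$ with $\phi = P/Q \in \bbQ(y_0, \ldots, y_m)$, which is just the polynomial identity $Q(u_0, \ldots, u_m)\,u_{m+1} = P(u_0, \ldots, u_m)$ in $\bbQ(x_1, \ldots, x_k)$. Applying $\sigma^n$, which fixes the rational coefficients of $P$ and $Q$ and commutes with the ring operations, converts this into $Q(u_n, \ldots, u_{n+m})\,u_{n+m+1} = P(u_n, \ldots, u_{n+m})$ for every $n \geq 0$, delivering the simple rational recurrence $u_{n+m+1} = \phi(u_n, \ldots, u_{n+m})$ of depth $m+1$.

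The main obstacle I anticipate is the finite generation of $\bbK$; although a classical fact, the argument genuinely relies on the commutative-algebra machinery of transcendence bases and the finiteness of algebraic closures inside finitely generated extensions. A secondary technicality is ensuring the denominator $Q$ does not vanish at any shifted input $(u_n, \ldots, u_{n+m})$: this is immediate when the rational map $R$ is dominant, for then $\sigma$ is an injective field endomorphism of $\bbQ(x_1, \ldots, x_k)$ and $\sigma^n(Q(u_0, \ldots, u_m)) \neq 0$ automatically. In the general non-dominant case one must either choose a more careful representative of $\phi$ or enlarge the depth $m$ using a localization at the stabilized orbit ideal, which is where I expect the commutative-algebra techniques highlighted in the introduction to play their central role.
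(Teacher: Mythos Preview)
Your approach is essentially the paper's: the paper also shows that the ascending chain $\bbQ(F_0^{(1)},\ldots,F_n^{(1)})$ stabilises by invoking the classical fact that subextensions of a finitely generated field extension are finitely generated (\cref{lem:bourbaki}, used in the proof of \cref{thm:noether}), and then propagates the resulting identity $F_{m+1}^{(1)}=R(F_0^{(1)},\ldots,F_m^{(1)})$ to all $n$ via exactly your shift endomorphism~$\sigma$ (their Substitution \cref{lem:substitution}). Your concern about the denominator $Q$ vanishing after shifting is legitimate, but the paper treats it in precisely the way you describe---applying the substitution lemma to the numerator of $R(y_0,\ldots,y_m)-y_{m+1}$ and reading off the rational recurrence---so your sketch is at the same level of rigor as the paper's own proof.
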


The proof proceeds as follows.
From the functions defining the ratrec system we build a sequence of field extensions
\[\bbQ \subseteq \bbF_0 \subseteq \bbF_1 \subseteq \bbF_2 \subseteq \ldots \subseteq \bbQ(x_1,\ldots,x_k)\]
and translate the problem of belonging to the class of simple ratrec sequences to the question of whether this sequence of field extensions eventually stabilises.
In order to estimate at which level the stabilisation occurs we use certain results on basic algorithms for rational function fields~\cite{Muller-Quade:Steinwandt:JSC:1999}.
%
We believe that this technique could be extended to prove \cref{thm:main}, but we also show an example why our current results are not strong enough.

Note that if~\cref{thm:main} is moreover efficient, it gives a simple algorithm to check zeroness for polyrec. Indeed, since polyrec is a particular case of ratrec, then once a sequence is expressed as a simple ratrec it suffices to check whether the first elements of the sequence are 0. This would improve the
Ackermann upper bound inherited from polynomial automata from \cite{BenediktDuffSharadWorrell:PolyAut:LICS:2017}.
This suggests that for polyrec sequences the natural object of study are rational function fields,
which are of more algebraic nature and could provide better complexity bounds
than the order-theoretic techniques based on sequences of polynomial ideals and Hilbert's finite basis theorem \cite{BenediktDuffSharadWorrell:PolyAut:LICS:2017}.

Our final result is a complexity lower bound for the zeroness problem of polyrec sequences.

\begin{restatable}{theorem}{thmZeronessLowerBound}
    \label{thm:lower-bound}
    The zeroness problem for polynomial recursive sequences is \PSPACE-hard.
\end{restatable}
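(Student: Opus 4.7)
The plan is to reduce from the standard \PSPACE-complete problem of deciding whether a deterministic Turing machine $M$ accepts an input $w$ while using only space $s = |w|^c$ for some fixed constant $c$. I would construct, in polynomial time, a polyrec system whose designated output sequence $u_n$ satisfies $u_n = 0$ for all $n \in \N$ if and only if $M$ does not accept $w$. Since \PSPACE is closed under complement, this establishes \PSPACE-hardness of the zeroness problem.

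For the encoding I would introduce one polyrec sequence per component of the configuration of $M$: a $0/1$-indicator $s_q$ for each control state $q \in Q$ (with invariant $\sum_q s_q = 1$), indicators $t_{j,\sigma}$ for each tape cell $j \in \{1,\dots,s\}$ and symbol $\sigma \in \Sigma$ (with $\sum_\sigma t_{j,\sigma} = 1$ for every $j$), and indicators $h_j$ for the head position $j \in \{1,\dots,s\}$ (with $\sum_j h_j = 1$). The initial values are chosen to reflect the start configuration of $M$ on $w$.

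Next, I would implement one transition step of $M$ by constant-degree polynomial updates that preserve the indicator invariants. For example, the new state indicator takes the form $s_{q'}^{\mathrm{new}} := \sum_{(q,\sigma)\,\to\,q'} s_q \cdot \sum_j h_j \cdot t_{j,\sigma}$, where the outer sum ranges over all pairs $(q,\sigma)$ such that $\delta$ maps them to state $q'$ and the inner sum reads the symbol currently under the head. Analogous polynomials govern the updates of the tape cells (only the cell under the head may change) and of the head position (shifted by $\pm 1$). On valid configurations exactly one summand in each such polynomial evaluates to $1$, so the $0/1$-indicator invariants are maintained inductively. Making the accepting state $q_{\mathrm{accept}}$ absorbing (by adding a self-loop) guarantees that once $s_{q_{\mathrm{accept}}}$ becomes $1$ it stays $1$. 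Taking the output sequence $u := s_{q_{\mathrm{accept}}}$ completes the reduction. The overall size is polynomial in $|M|+|w|$: there are $\Oh(|Q| + s(|\Sigma|+1))$ sequences and each defining polynomial has constant degree and polynomially many monomials.

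The main obstacle will be a careful case-by-case verification that the polynomial updates faithfully implement $\delta$ without ever breaking the $0/1$-indicator invariants; the explicit formulas sketched above are designed precisely to make this check routine. Finally, to derive the claim about the Skolem problem (mentioned immediately after the theorem), it suffices to take as the output sequence $1 - s_{q_{\mathrm{accept}}}$ instead of $s_{q_{\mathrm{accept}}}$: then $u_n = 0$ for some $n$ if and only if $M$ accepts $w$, establishing \PSPACE-hardness of the Skolem problem for polyrec as well.
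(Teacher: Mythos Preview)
Your proposal is correct and follows a genuinely different route from the paper. The paper reduces from \QBF validity: it builds counter sequences $\bc^1,\dots,\bc^k$ that cycle through all truth assignments, sequences $\bd^0,\dots,\bd^k$ that evaluate the formula with progressively more quantifiers resolved, and auxiliary ``memory'' sequences $\bff^0,\dots,\bff^{k-1}$ whose role is to retain a value of $\bd^{i-1}$ from $2^{i-1}$ steps earlier so that the existential/universal combination $d^{i-1}_n \circledast_i d^{i-1}_{n-2^{i-1}}$ can be computed within the depth-$1$ polyrec format. Your direct simulation of a polynomial-space deterministic Turing machine sidesteps this bookkeeping entirely: one polyrec step equals one machine step, and no long-range memory is needed. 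Both constructions stay inside $\{0,1\}$ over an arbitrary field, so both yield the Skolem corollary by the same $1-u$ trick. Your approach is arguably more elementary and closer to the textbook \PSPACE reduction; the paper's \QBF encoding is more combinatorial and makes the exponential periodicity of the witness explicit (the resulting sequence is zero iff its first $2^k$ values are), which ties in with the paper's later remark about witness length. Either argument suffices for the theorem as stated.
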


As far as we know, prior to this work nothing was known about the complexity of zeroness for polyrec sequences, except for the Ackermann upper bound following from polynomial automata~\cite{BenediktDuffSharadWorrell:PolyAut:LICS:2017}.
%
The lower bound is proved by reducing from the \QBF validity problem.

Given \cref{thm:main} it seems natural to investigate the zeroness problem for ratrec sequences. 
The issue is that it is not clear what would be the input for such a decision problem. Recall that to define ratrec sequences we allow for rational functions in the recursion, which means that we have to deal with division in order to compute the elements of the sequences. Then either one would require that the input sequence comes with a promise that all elements are well-defined and no division by $0$ occurs; or one would need to verify whether division by $0$ occurs in the input sequence. We find the former solution unnatural,
and the latter is at least as hard as the so-called Skolem problem (c.f.~below),
which is not known to be decidable even for linear recursive sequences.


\subparagraph*{Related work}
The zeroness problem has been extensively studied.
In the field of automata theory,
we can mention applications to the equivalence problem of several classes of automata and grammars,
starting from weighted finite automata \cite{Schutzenberger:IC:1961}
and polynomial automata \cite{BenediktDuffSharadWorrell:PolyAut:LICS:2017} already mentioned above,
and including context-free grammars \cite{ChomskySchutzenberger:Algebraic:1963},
multiplicity equivalence of finite automata \cite{Tzeng:IPL:1996}
and multitape finite automata \cite{HarjuKarhumaki:TCS:1991,Worrell:ICALP:2013},
unambiguous context-free grammars \cite[Theorem 5.5]{SalomaaSoittola:Book:PowerSeries:1978}
(c.f.~\cite{ForejtJancarKieferWorrell:IC:2014,Clemente:EPTCS:2020} for a \PSPACE upper bound),
polynomial grammars (which generalise polynomial automata) \cite[Chapter 11]{BojanczykCzerwinski:Toolbox:2018},
deterministic top-down tree-to-string transducers \cite{SeidlManethKemper:JACM:2018},
MSO transductions on unordered forests \cite{BoiretPiorkowskiSchmude:FSTTCS:2018,Bojanczyk:SIGLOG:2019},
MSO transductions of bounded treewidth under a certain equivalence relation \cite{BojanczykSchmude:MFCS:2020},
Parikh automata \cite{BostanCarayolKoechlinNicaud:ICALP:2020},
and unambiguous register automata \cite{BarloyClemente:STACS21}.
By replacing (pointwise) multiplication with convolution in the definition of polyrec sequences we obtain the so-called \emph{convolution recursive sequences},
for which the zeroness problem can be solved in \PSPACE \cite[Theorem 4]{Clemente:EPTCS:2020}.
%

The zeroness problem of D-finite \cite{Zeilberger:JCAM:1990} and, more generally,
D-algebraic power series \cite{DenefLipshitz:JSL:1989,Hoeven:AAECC:2019} is known to be decidable,
but its computational complexity has not been investigated.

A natural problem related to the zeroness problem is the so-called \emph{Skolem problem},
which asks whether a given sequence $a_n$ has a zero,
i.e., whether for some $n$ we have $a_n = 0$.
As a corollary of the constructions used to prove \cref{thm:lower-bound},
it follows that the Skolem problem for polyrec sequences is \PSPACE-hard.
Only \NPTIME-hardness was formerly known,
and already for linear recursive sequences~\cite[Corollary 2.1]{BlondelPortier:LAA:2002}.
Decidability of the Skolem problem for linear recursive sequences is a long-standing open problem
(c.f.~the survey paper \cite{OuaknineWorrell:SIGLOG:2015}).
It is interesting to notice that those lower bounds are obtained already on the fixed field with two elements $\set{0, 1}$,
and are thus of a combinatorial rather than numerical nature.
The Skolem problem for weighted automata over $\bbQ$ (that generalise linear recursive sequences) is undecidable~\cite{paz71}.
%



\section{Preliminaries}
\label{sec:preliminaries}
By $\bbN$ we denote the set of nonnegative integers.
We denote an arbitrary field by $\bbF$,
and we use $0$ and $1$ to denote the zero, resp., one elements thereof.
Example fields of interest in this paper are: rationals $\bbQ$; and the two-element field $\bbF_2$.
A {\em{sequence}} over a \emph{domain} $\bbD$ is a function $u \colon \bbN \to \bbD$. The sequences considered in this work are over domains that have a field structure, like rationals $\bbQ$.
We use bold-face letters as a short-hand for sequences, e.g., $\bu = \langle u_n
\rangle_{n \in \bbN}$.

In this paper we work with multivariate polynomials and rational functions.
The \emph{(combined) degree} of a monomial $x_1^{d_1} \cdots x_k^{d_k}$ is $d_1 + \cdots + d_k$ and the degree of a polynomial $P \in \bbQ(x_1, \dots, x_k)$, written $\deg P$, is the maximum degree of monomials appearing in it.
A {\em{rational function}} is a formal fraction of two polynomials, where the denominator is required to be non-zero. The degree of a rational function is the maximum of the degrees of the numerator and the denominator.
Recall that for any field $\bbF$ and a set of variables $x_1,\ldots,x_n$, polynomials over $x_1,\ldots,x_n$ form the ring $\bbF[x_1,\ldots,x_n]$, while rational functions over $x_1,\ldots,x_n$ form the field $\bbF(x_1,\ldots,x_n)$. We also write $\bbF[\bx]$ and $\bbF(\bx)$, where $\bx = (x_1,\ldots,x_n)$.

%

The computational aspects of multivariate polynomials, in particular their representation on input to algorithms, are explained in \cref{sec:lower}, as they will be of no concern in \cref{sec:rationalrec,sec:transcendence}.

%
%


\section{Rational recursive sequences}
\label{sec:rationalrec}
We start with the central definitions, which were already discussed in Section~\ref{sec:introduction}.

\begin{definition}\label{def:ratrec}\label{def:polyrec}
    A sequence $\bu^{(1)}$ over a field $\bbF$ is {\em{rationally recursive}} (or {\em{ratrec}} for short) of \emph{dimension} $k$ and \emph{degree} $D$ if there exist auxiliary sequences $\bu^{(2)},\ldots,\bu^{(k)}$ over $\bbF$ and rational functions $P_1,\ldots,P_k\in \bbF(x_1,\ldots,x_k)$ of degree at most $D$ such that for all $n\in \bbN$, we have
    \begin{align}\label{def:prs}
        \left\{\begin{array}{lcl} 
            u^{(1)}_{n+1}   &=& P_1(u^{(1)}_n,\ldots,u^{(k)}_n), \\
                        &\vdots& \\
            u^{(k)}_{n+1}   &=& P_k(u^{(1)}_n,\ldots,u^{(k)}_n).
        \end{array}\right.
    \end{align}
    A sequence $\bu$ over a field $\bbF$ is {\em{polynomially recursive}} (or {\em{polyrec}} for short)
    if it satisfies the same definition above, where $P_1,\ldots,P_k$ are taken as polynomials in $\bbF[x_1,\ldots,x_k]$.
    We refer to $(\bu^{(1)},\ldots,\bu^{(k)})$ as the system defining $\bu^{(1)}$.
\end{definition}

In what follows we assume that whenever $\bu$ is a ratrec sequence, say defined by a system $(\bu=\bu^{(1)},\ldots,\bu^{(k)})$, for all $n\in \bbN$ all the right hand sides of equations~\eqref{def:prs} are well-defined, that is, no denominator of any rational expression contained in the right hand side is zero.


For instance, the sequence of \emph{Catalan numbers} $C_n = \frac 1 {n+1} \cdot {2n \choose n}$ is ratrec.
This can be seen in several ways. For example, they satisfy the recurrence $C_{n+1} = \frac {2(2n + 1)}{n+2} \cdot C_n$,
giving rise to the following ratrec system:
\begin{align*}
    \left\{\begin{array}{ll}
    u_{n+1} &= \frac {2(2v_n + 1)}{v_n+2} \cdot u_n, \\
    v_{n+1} &= v_n + 1.
    \end{array}\right.
\end{align*}

More generally, any P-recursive sequence $a_n$ is ratrec.
A sequence is \emph{P-recursive} \cite[Sec.~6.4]{StanleyFomin:EU:CUP:2001} if it satisfies a single recursion of the form
\begin{align}\label{def:P-rec}
    P_0(n) \cdot a_n + P_1(n) \cdot a_{n+1} + \cdots + P_d(n) \cdot a_{n+d} = 0,
\end{align}
for every $n$ large enough,
where $P_0, \dots, P_d \in \bbQ[n]$ are polynomials of the index variable $n$.
This is readily transformed into the ratrec system

\begin{align*}
    \left\{\begin{array}{lcl}
        u^{(d)}_{n+1}
            &=& - \frac {P_0(v_n)} {P_d(v_n)} \cdot u^{(0)}_n - \cdots - \frac {P_{d-1}(v_n)} {P_d(v_n)} \cdot u^{(d-1)}_n, \\
        u^{(d-1)}_{n+1},
            &=& u^{(d)}_{n}, \\
            &\vdots& \\
        u^{(0)}_{n+1} &=& u^{(1)}_{n}, \\
        v_{n+1} &=& v_n + 1.
    \end{array}\right.
\end{align*}

Assuming $v_0 = 0$ and $u^{(0)}_0 = a_0, \dots, u^{(d)}_0 = a_d$,
it is immediate to verify $v_n = n$ and $u^{(0)}_n = a_n, \dots, u^{(d)}_n = a_{n+d}$ for every $n \in \bbN$.



The family of ratrec sequences strictly includes both {P-recursive} sequences and polyrec sequences.
As an example consider the sequence $u_n = 2^{2^n} + C_n$.
On the one hand, this sequence is certainly ratrec because it is the sum of a polyrec and a {P-recursive} sequence (which are ratrec)
and ratrec sequences are closed under sum.
On the other hand, $u_n$ is not P-recursive since it grows asymptotically faster than any P-recursive sequence
(every {P-recursive} sequence is in $O((n!)^\gamma)$ for some constant $\gamma\in\bbR$ \cite[Proposition 3.11]{Lipshitz:D-finite:JA:1989}). Further, $u_n$ is also not polyrec, because $2^{2^n}$ is polyrec, $C_n$ is not \cite[Corollary~4.1]{cadilhac2021polynomial}, and polyrec sequences are closed under~sum and subtraction.

In \cite[Theorem~7.1]{cadilhac2021polynomial}, the following property of ratrec sequences is proved: if $\bu$ is ratrec, then there exists $m\in \N$ and a {\em{cancelling polynomial}} $P\in \bbQ[y_0,\ldots,y_m]$, that is, a non-zero polynomial such that
$$P(u_n,u_{n+1},\ldots,u_{n+m})=0\qquad\textrm{for all }n\in \bbN.$$

\begin{theorem}[Theorem~7.1 in \cite{cadilhac2021polynomial}]\label{thm:ratrec-cancelling}
 Every ratrec sequence admits a cancelling polynomial.
\end{theorem}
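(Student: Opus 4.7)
My plan is to argue via transcendence degree in the rational function field $\bbQ(x_1,\ldots,x_k)$. The core idea is to ``lift'' the sequence by treating its initial values as formal indeterminates, carry out the algebra symbolically in this enlarged field, and finally specialise to the true initial values to obtain a polynomial identity valid in $\bbQ$.

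First, I would introduce symbolic counterparts of the sequences. Let $x_1,\ldots,x_k$ be formal variables and define $\phi_n^{(i)}\in\bbQ(x_1,\ldots,x_k)$ by $\phi_0^{(i)}=x_i$ and $\phi_{n+1}^{(i)}=P_i(\phi_n^{(1)},\ldots,\phi_n^{(k)})$; writing $\Phi=(P_1,\ldots,P_k)$ for the one-step map, $\phi_n^{(i)}$ is simply the $i$-th coordinate of the $n$-fold iterate $\Phi^n$. Two elementary properties of this construction drive the argument: the shift identity $\phi_{n+j}^{(i)}=\phi_j^{(i)}\circ\Phi^n$, and the fact that the blanket well-definedness assumption on the original system implies, by induction on $n$, that each $\phi_n^{(i)}$ admits a legitimate evaluation at $(u_0^{(1)},\ldots,u_0^{(k)})$ and that this evaluation equals $u_n^{(i)}$.

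The crux is then to observe that the $k+1$ elements $\phi_0^{(1)},\phi_1^{(1)},\ldots,\phi_k^{(1)}$ live in a field of transcendence degree $k$ over $\bbQ$, and so must be algebraically dependent: there is a nonzero polynomial $P\in\bbQ[y_0,\ldots,y_k]$ with $P(\phi_0^{(1)},\phi_1^{(1)},\ldots,\phi_k^{(1)})=0$ identically in $\bbQ(x_1,\ldots,x_k)$. Precomposing this identity with $\Phi^n$ and applying the shift property turns it into $P(\phi_n^{(1)},\phi_{n+1}^{(1)},\ldots,\phi_{n+k}^{(1)})=0$ in $\bbQ(x_1,\ldots,x_k)$ for every $n$. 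Specialising both sides at the tuple $(u_0^{(1)},\ldots,u_0^{(k)})$ then yields $P(u_n^{(1)},u_{n+1}^{(1)},\ldots,u_{n+k}^{(1)})=0$ in $\bbQ$, exhibiting $P$ as a cancelling polynomial with $m=k$.

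The only delicate point I foresee is the interchange of symbolic identity and numerical specialisation in this last step: a rational function that is the zero element of $\bbQ(x_1,\ldots,x_k)$ evaluates to zero at every point of its natural domain, but one must verify that $(u_0^{(1)},\ldots,u_0^{(k)})$ belongs to that domain for the composed expression. This is precisely what the well-definedness assumption guarantees, so the substitution commutes with evaluation and the argument closes. I regard this as a bookkeeping matter rather than a genuine conceptual obstacle; the conceptual content of the proof is essentially exhausted by the transcendence-degree count above.
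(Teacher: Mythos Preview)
Your argument is correct. Note that the present paper does not itself prove this theorem---it is quoted from~\cite{cadilhac2021polynomial}---so there is no in-paper proof to compare against directly. That said, your approach is exactly the machinery the paper sets up in \cref{sec:transcendence} for the stronger symbolic result \cref{thm:simple:ratrec}: your ``shift identity'' $\phi_{n+j}^{(i)}=\phi_j^{(i)}\circ\Phi^n$ is the content of the Substitution Lemma (\cref{lem:substitution}), and the transcendence-degree count is the engine behind \cref{lem:trdeg}. A pleasant by-product of your version is the explicit bound $m=k$ on the depth of the cancelling polynomial, which the paper does not record for concrete (non-symbolic) ratrec sequences.

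The step you flag as delicate---specialising the symbolic identity at the concrete initial tuple $u_0=(u_0^{(1)},\ldots,u_0^{(k)})$---is handled correctly. The well-definedness hypothesis on the concrete system guarantees, by the induction you sketch, that every $\phi_n^{(i)}$ lies in the local ring $\mathcal O_{u_0}\subset\bbF(x_1,\ldots,x_k)$ of rational functions regular at $u_0$, with $\mathrm{ev}_{u_0}(\phi_n^{(i)})=u_n^{(i)}$; since $P$ is a \emph{polynomial}, $P(\phi_n^{(1)},\ldots,\phi_{n+k}^{(1)})$ remains in $\mathcal O_{u_0}$ and evaluation commutes with the composition. It is worth contrasting this with \cref{subsec:obstacles}: there the analogous symbolic identity involves a \emph{rational} function $R$, the specialisation can genuinely fail (the paper gives an explicit example), and this is precisely why \cref{thm:main} is still only a conjecture. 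Your proof exploits that a mere cancelling polynomial, unlike $R$, carries no denominator of its own.
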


In~\cite[Theorem~5.3]{cadilhac2021polynomial} it is shown that the sequence $u_n=n^n$ has no cancelling polynomial, and hence is not polyrec and not ratrec.

\section{Transcendence degrees}
\label{sec:transcendence}


In this section we consider ratrec sequences as in \cref{def:ratrec} over the field $\bbQ(\bx)$.
Let $(\bF^{(1)},\ldots,\bF^{(k)})$ be a system defining $\bF^{(1)}$.
In this section we will consider sequences with the following fixed initial conditions: $F_0^{(i)}= x_i$. Note that this technical assumption is important, in particular we cannot initialise $F_0^{(i)}$ with elements in $\bbQ$. (If we could, this class would generalise ratrec over the field $\bbQ$.)
%

\cref{thm:simple:ratrec} below formalises \cref{thm:main-informal} and is the main result of this paper. 
In essence, we show that a ratrec definition over $\bbQ(\bx)$ can be translated to a simple ratrec over $\bbQ(\bx)$ with a polynomial recursion depth.
We hope that this insight might lead towards a resolution of \cref{thm:main}.

\begin{theorem}\label{thm:simple:ratrec}
    Let $\bF^{(1)}$ be a ratrec sequence over the field $\bbQ(\bx)$, defined by a system $(\bF^{(1)},\ldots,\bF^{(k)})$, with the initial conditions: $F_0^{(i)}= x_i$ for $i=1,\ldots,k$.
    Then there exists a rational function 
    $R \in \bbQ(y_0,\ldots,y_m)$ such that
    %
    \begin{align*}
        F^{(1)}_{n+m+1} = R(F^{(1)}_n,F^{(1)}_{n+1},\ldots,F^{(1)}_{n+m}), \text{ for all } n \in \bbN.
    \end{align*}
    Moreover, if $F_n^{(1)}$ is of dimension $k$ and degree $D$,
    then $m$ can be bounded from above by $k + k^3\log (kD)$.
\end{theorem}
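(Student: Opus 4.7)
The plan is to translate the theorem into a question about an ascending chain of subfields of $\bbQ(\bx)$, bound the index at which the chain stabilises using field-theoretic tools, and then conclude via a shift-invariance argument. Set $\bbF_n := \bbQ(F_0^{(1)}, F_1^{(1)}, \ldots, F_n^{(1)})$. Once I locate an index $m$ with $F_{m+1}^{(1)} \in \bbF_m$ (\ie $\bbF_{m+1} = \bbF_m$), the very definition of $\bbF_m$ furnishes an $R \in \bbQ(y_0,\ldots,y_m)$ with $F_{m+1}^{(1)} = R(F_0^{(1)},\ldots,F_m^{(1)})$ as an identity in $\bbQ(\bx)$. The heart of the argument is to bound such an $m$ by $k + k^3 \log(kD)$.

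Alongside $\bbF_n$ I track the companion chain $\bbG_n$, defined as the relative algebraic closure of $\bbF_n$ inside $\bbQ(\bx)$; a standard fact about finitely generated field extensions gives $[\bbG_n : \bbF_n] < \infty$. Every strict step $\bbF_n \subsetneq \bbF_{n+1}$ falls into exactly one of two cases. Either $F_{n+1}^{(1)}$ is transcendental over $\bbF_n$, in which case $\trdeg(\bbF_{n+1}/\bbQ) > \trdeg(\bbF_n/\bbQ)$ and $\bbG_{n+1} \supsetneq \bbG_n$; this happens at most $k$ times. Otherwise $F_{n+1}^{(1)}$ is algebraic over $\bbF_n$, so $\bbG_{n+1} = \bbG_n$ and $[\bbF_{n+1}:\bbF_n] \geq 2$, which at least halves the integer $[\bbG_n : \bbF_n]$. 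Between two consecutive transcendental steps the chain $\bbF_n$ is trapped inside a fixed $\bbG_n$, so at most $\log_2[\bbG_n : \bbF_n]$ algebraic steps may occur per such ``epoch''. Summing over the at most $k+1$ epochs, the total number of strict steps is bounded by $k$ plus the epoch-wise algebraic budgets. Plugging in the degree bounds on relative algebraic closures of finitely generated subfields of $\bbQ(\bx)$ from~\cite{Muller-Quade:Steinwandt:JSC:1999}, together with a careful choice of generators inside each epoch so that their degrees do not simply scale as $D^n$, delivers $m \leq k + k^3 \log(kD)$.

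For shift-invariance, let $T := (P_1,\ldots,P_k) \colon \bbQ(\bx)^k \to \bbQ(\bx)^k$ be the vectorial rational update, so that $T^n(\bx) = (F_n^{(1)},\ldots,F_n^{(k)})$ and, by composition of rational maps, $F_j^{(1)}(F_n^{(1)},\ldots,F_n^{(k)}) = F_{n+j}^{(1)}$ for every $j$. Clearing denominators, the identity $F_{m+1}^{(1)} = R(F_0^{(1)},\ldots,F_m^{(1)})$ becomes a polynomial identity $Q(x_1,\ldots,x_k) = 0$ in $\bbQ[\bx]$. Substituting $x_i \mapsto F_n^{(i)}$ preserves this polynomial identity, and after unwinding the compositions gives exactly $F_{n+m+1}^{(1)} = R(F_n^{(1)},\ldots,F_{n+m}^{(1)})$; the well-definedness hypothesis on the system rules out any denominator vanishing along the way.

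The main obstacle I anticipate is the quantitative estimate in the second paragraph. The qualitative fact that $\bbF_n$ eventually stabilises is essentially immediate from transcendence-degree considerations, but pinning the stabilisation index down to $k + k^3 \log(kD)$ rather than to some much larger quantity requires a careful choice of generators for each $\bbF_n$ (so that their degrees do not blow up as $D^n$) combined with a sharp epoch-by-epoch application of the Müller-Quade--Steinwandt bounds. The factor $k^3$ is what I expect to emerge from the $k{+}1$ epochs, each contributing at most an $O(k^2\log(kD))$-sized algebraic budget.
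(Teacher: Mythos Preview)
Your overall strategy mirrors the paper's: form the tower $\bbF_n = \bbQ(F_0^{(1)},\ldots,F_n^{(1)})$, find the first $m$ where it stabilises, express $F_{m+1}^{(1)}$ as a rational function $R$ of the generators, then propagate to all $n$ by the substitution $x_i \mapsto F_n^{(i)}$. Your shift-invariance paragraph is exactly the paper's Substitution Lemma (\cref{lem:substitution}) and is fine.

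The genuine gap is in the quantitative argument. You allow transcendental and algebraic strict steps to interleave, producing up to $k+1$ ``epochs'', and then defer to an unspecified ``careful choice of generators inside each epoch so that their degrees do not simply scale as $D^n$'' to reach the bound. The point you are missing is that the very substitution argument you deploy at the end also shows there is \emph{only one} algebraic epoch. Concretely: the first time $F_{j_1+1}^{(1)}$ is algebraic over $\bbF_{j_1}$, clear denominators to get a nonzero $Z \in \bbQ[y_0,\ldots,y_{j_1+1}]$ with $Z(F_0^{(1)},\ldots,F_{j_1+1}^{(1)}) = 0$; shifting gives $Z(F_n^{(1)},\ldots,F_{n+j_1+1}^{(1)}) = 0$ for all $n$, so every later step is algebraic as well. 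This is the paper's \cref{lem:trdeg}: all transcendental steps occur among the first $j_1 \leq k$ indices, and a single algebraic epoch follows, starting with generators of degrees at most $(kD)^0,\ldots,(kD)^k$ by \cref{lem:ratrec:degree:growth}. The paper then traps $\bbF_\infty$ inside a purely transcendental field $\bbQ(f_1,\ldots,f_r)$ with $r \leq k$ via a Nagata-type lemma (\cref{lem:Nagata}) and bounds $[\bbQ(f_1,\ldots,f_r):\bbF_{j_1}]$ by $(\deg F_0^{(1)} \cdots \deg F_{j_1}^{(1)})^r$ using M\"uller-Quade--Steinwandt; the logarithm of this is at most $k^3\log(kD)$. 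Without the single-epoch reduction, late epochs would begin with generators of degree $(kD)^n$ for $n$ already comparable to the target bound, and you have not said what your ``careful choice'' is that circumvents this.

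A secondary point: your use of the relative algebraic closure $\bbG_n$ is a clean substitute for the paper's $\bbQ(f_1,\ldots,f_r)$, but knowing $[\bbG_n:\bbF_n] < \infty$ is not enough --- you still need an explicit upper bound, and M\"uller-Quade--Steinwandt bounds the degree of a \emph{single} algebraic element, not of $\bbG_n$ as a whole. Some containment of $\bbG_n$ in a field with a known finite generating set (which is what \cref{lem:Nagata} supplies) is needed to turn this into a bound on $[\bbG_n:\bbF_n]$.
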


Before we proceed to the proof, let us note that if we write $R(y_0,\ldots,y_m)=\frac{A(y_0,\ldots,y_m)}{B(y_0,\ldots,y_m)}$, where $A,B\in \bbQ[y_0,\ldots,y_m]$, then \cref{thm:simple:ratrec} shows that the following polynomial is cancelling for $\bu^{(1)}$:
\[P(y_0,\ldots,y_m,y_{m+1})=y_{m+1}\cdot B(y_0,\ldots,y_m)-A(y_0,\ldots,y_m).\]
Thus, \cref{thm:simple:ratrec} shows (and in fact, is equivalent to) that every ratrec sequence over $\bbQ(\bx)$ admits a cancelling polynomial that is linear in the last variable (here $y_{m+1}$),
improving upon \cref{thm:ratrec-cancelling}.



The remainder of this section is devoted to the proof of \cref{thm:simple:ratrec} and to a discussion related to it.
In particular, the first part of the theorem (existence) will be proved in \cref{sec:transcendence:A}
and the concrete bound on the depth $m$ will be proved in \cref{sec:transcendence:B}.
%


Let us make a few observations about the sequences $F_n^{(1)},\ldots,F_n^{(k)}$. First, a straightforward estimation shows that the degrees of functions $F_n^{(1)},\ldots,F_n^{(k)}$ grow at most single-exponentially in $n$.

\begin{restatable}{lemma}{lemRatRecDegreeGrowth}
    \label{lem:ratrec:degree:growth}
    For $n\in \bbN$, let $d_n$ be the maximum degree of $F_n^{(1)}, \dots, F_n^{(k)}$.
    Then $d_n \leq (k \cdot D)^n$.
\end{restatable}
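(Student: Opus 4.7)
The plan is to proceed by induction on $n$, with the main work being a careful tracking of degrees when rational functions are composed.

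For the base case, we have $F_0^{(i)} = x_i$, so $d_0 = 1 = (kD)^0$. For the inductive step, I will show that $d_{n+1} \le kD \cdot d_n$, which combined with the base case immediately gives $d_n \le (kD)^n$.

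To prove the inductive inequality, fix $i \in \{1,\ldots,k\}$ and write $P_i = A_i/B_i$ with $A_i, B_i \in \bbQ[x_1,\ldots,x_k]$ of degree at most $D$. For each $j$, express $F_n^{(j)} = G_j/H_j$ with $\deg G_j, \deg H_j \le d_n$. The key computation is the following: for any monomial $y_1^{a_1}\cdots y_k^{a_k}$ with $\sum_j a_j \le D$, one has the identity
\[
\prod_{j=1}^k \left(\frac{G_j}{H_j}\right)^{a_j} \;=\; \frac{\prod_{j=1}^k G_j^{a_j}\cdot H_j^{D-a_j}}{\prod_{j=1}^k H_j^{D}},
\]
where the chosen common denominator $\prod_j H_j^D$ has degree at most $kD\,d_n$, and the numerator has degree at most $\sum_j a_j\, d_n + \sum_j (D-a_j)\,d_n = kD\,d_n$. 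Summing over all monomials of $A_i$ (and similarly of $B_i$) and reducing to this common denominator, I obtain $A_i(F_n^{(1)},\ldots,F_n^{(k)}) = \widehat A_i / \prod_j H_j^D$ and $B_i(F_n^{(1)},\ldots,F_n^{(k)}) = \widehat B_i / \prod_j H_j^D$ with $\deg \widehat A_i, \deg \widehat B_i \le kD\,d_n$. The common denominator cancels in the quotient, so
\[
F_{n+1}^{(i)} \;=\; \frac{A_i(F_n^{(1)},\ldots,F_n^{(k)})}{B_i(F_n^{(1)},\ldots,F_n^{(k)})} \;=\; \frac{\widehat A_i}{\widehat B_i},
\]
which is a rational function of degree at most $kD\,d_n$. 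Taking the maximum over $i$ yields $d_{n+1} \le kD\cdot d_n$, closing the induction.

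The argument is essentially a bookkeeping exercise, and I do not expect any real obstacle. The one subtle point is the choice of common denominator: picking $\prod_j H_j^D$ rather than the naive monomial-by-monomial denominators is what keeps the bound at $kD\,d_n$ instead of something like $kD^2\,d_n$. One should also note that the stated bound holds for \emph{some} representation of $F_{n+1}^{(i)}$ as a ratio of polynomials; cancellation in lowest terms can only decrease the degree, so the bound on $d_{n+1}$ (as the degree of the rational function, which is the minimum over all representations) is preserved.
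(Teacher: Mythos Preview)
Your proof is correct and follows essentially the same approach as the paper: both proceed by induction, reduce each $F_n^{(j)}$ to a common denominator, and track the degree of the resulting numerator and denominator after substituting into $A_i$ and $B_i$. The only cosmetic difference is that the paper first takes $C$ to be the least common multiple of the denominators $H_1,\ldots,H_k$ (of degree at most $k\,d_n$) and then uses $C^D$ as the common denominator, whereas you use $\prod_j H_j^{D}$ directly; both choices yield the same bound $d_{n+1}\le kD\,d_n$.
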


\begin{proof}
    We proceed by induction on $n$. Initially we have $d_0 = 1$ by definition.
    By \cref{def:ratrec} $F_{n+1}^{(i)}$ is obtained by substituting rational functions $F_n^{(1)}, \ldots, F_n^{(k)}$ of degree at most $d_n$
    into a rational function $P_i$ of degree at most $D$.
    Let $P_i = \frac A B$ be the ratio of two polynomials $A, B \in \bbQ[\bx]$
    of degree at most $D$.
    Let $C \in \bbQ[\bx]$ be the least common multiple of all denominators of $F_n^{(1)}, \ldots, F_n^{(k)}$,
    and thus of degree at most $k \cdot d_n$.
    We can then write $F_n^{(1)} = \frac {G^{(1)}} C, \dots, F_n^{(k)} = \frac {G^{(k)}} C$,
    where the numerators $G^{(1)}, \dots, G^{(k)} \in \bbQ[\bx]$ are polynomials of degree also at most $k \cdot d_n$.
    %
    It follows that both $A(F_n^{(1)}, \ldots, F_n^{(k)})$ and $B(F_n^{(1)}, \ldots, F_n^{(k)})$ can be written as rational functions of the form $\frac {\hat A} {C^D}$, resp., $\frac {\hat B} {C^D}$,
    where the numerators are polynomials $\hat A, \hat B \in \bbQ[\bx]$ of degree at most $D \cdot k \cdot d_n$
    and the same holds for the common denominator $C^D \in \bbQ[\bx]$.
    It follows that $F_{n+1}^{(i)}$ is a rational function of degree $d_{n+1} \leq k \cdot D \cdot d_n$, as required.
\end{proof}

The next lemma is a key property implied by the recurrence: if several consecutive elements of the sequence $F_n^{(i)}$ satisfy some algebraic constraint, then this constraint is also satisfied at every step later in the sequence.

\begin{lemma}[Substitution lemma]
    \label{lem:substitution}
    Suppose $Z(y_0, \dots, y_m) \in \bbQ[y_0, \dots, y_m]$ is a polynomial such that~%
    $Z(F_0^{(i)}(\bx), \dots, F_{m}^{(i)}(\bx)) = 0$.
    Then
    $Z(F_n^{(i)}(\bx), \dots, F_{n+m}^{(i)}(\bx)) = 0$
    for all~$n \in \bbN$.
\end{lemma}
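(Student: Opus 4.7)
The key observation is that, because the initial conditions are $F_0^{(j)}(\bx) = x_j$, the whole tuple $(F_n^{(1)},\ldots,F_n^{(k)})$ is obtained simply by iterating the rational self-map
\[\Phi \colon \bbQ(\bx)^k \to \bbQ(\bx)^k, \qquad \Phi(\bx) := (P_1(\bx),\ldots,P_k(\bx)).\]
I therefore plan to prove the following shift identity, from which the lemma is immediate: for every $n,m \in \bbN$ and every $j \in \{1,\ldots,k\}$,
\[F_{n+m}^{(j)}(\bx) = F_m^{(j)}\bigl(\Phi^n(\bx)\bigr),\]
where $\Phi^n$ denotes the $n$-fold composition of $\Phi$ with itself and the right hand side is understood as the composition of rational functions (which is well-defined throughout by the standing assumption after \cref{def:ratrec}).

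I would prove this identity by induction on $m$ with $n$ fixed. For the base case $m=0$, the initial conditions $F_0^{(j)} = x_j$ give $F_0^{(j)}(\Phi^n(\bx)) = (\Phi^n(\bx))_j$, and a straightforward secondary induction on $n$ (using $F_{n+1}^{(j)} = P_j(F_n^{(1)},\ldots,F_n^{(k)})$) shows that the $j$-th coordinate of $\Phi^n(\bx)$ is exactly $F_n^{(j)}(\bx)$. For the inductive step, substituting $\Phi^n(\bx)$ into the defining equation $F_{m+1}^{(j)} = P_j(F_m^{(1)},\ldots,F_m^{(k)})$ and using the inductive hypothesis yields
\[F_{m+1}^{(j)}(\Phi^n(\bx)) = P_j\bigl(F_m^{(1)}(\Phi^n(\bx)),\ldots,F_m^{(k)}(\Phi^n(\bx))\bigr) = P_j\bigl(F_{n+m}^{(1)},\ldots,F_{n+m}^{(k)}\bigr) = F_{n+m+1}^{(j)}(\bx),\]
as required.

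Once the shift identity is in place, the lemma follows at once. Setting $H(\bx) := Z(F_0^{(i)}(\bx),\ldots,F_m^{(i)}(\bx)) \in \bbQ(\bx)$, the hypothesis reads $H = 0$ in $\bbQ(\bx)$. Substituting $\Phi^n(\bx)$ for $\bx$ in $H$ (a legal composition of rational functions since the $F_j^{(i)}$ are well-defined) yields the zero rational function, and by the shift identity this value is precisely $Z(F_n^{(i)}(\bx),\ldots,F_{n+m}^{(i)}(\bx))$.

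The only conceptual subtlety — and the point I would be careful to justify in the write-up — is that substituting rational functions into a rational function is legitimate here. This is where the symbolic initial conditions $F_0^{(i)} = x_i$ really matter: they are what make the sequence the orbit of a single map $\Phi$, and composition with $\Phi^n$ preserves the vanishing of any rational relation. I do not expect any genuine obstacle; the argument is a short syntactic manipulation once the right way of viewing $F_n^{(i)}$ as $\Phi^n$ applied to the identity has been isolated.
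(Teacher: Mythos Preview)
Your proposal is correct and essentially the same as the paper's proof. The paper phrases the single-step shift as applying the ring homomorphism $h\colon \bbQ[\bx]\to\bbQ(\bx)$ sending $x_j\mapsto F_1^{(j)}$ (your $\Phi$) and observing $h(F_n^{(i)})=F_{n+1}^{(i)}$, then iterates $n$ times; you package the iteration into a shift identity $F_{n+m}^{(j)}=F_m^{(j)}\circ\Phi^n$ up front, but the content is identical.
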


\begin{proof}
    By assumption we have
    \begin{align}
        \label{eq:subst}
        Z(F_0^{(i)}(\bx), \dots, F_{m}^{(i)}(\bx)) = 0.
    \end{align}
    Consider the ring homomorphism $h\colon \bbQ[\bx]\to \bbQ(\bx)$ that maps the variables $x_1,\ldots,x_k$ to rational functions $F_1^{(1)},\ldots,F_1^{(k)}$, respectively. For a rational function $P/Q$ such that $h(Q) \neq 0$, by $h(P/Q)$ we understand the rational function $h(P)/h(Q)$. (Note that such an extension of $h$ to $\bbQ(\bx)$ does not have to be a field homomorphism.) From the definition of the sequence $F_n^{(i)}$ it readily follows that 
    \begin{align*}
     h(F_n^{(i)}) = F_{n+1}^{(i)}, \quad \text{for all } n \in \bbN.
    \end{align*}
    Thus, by applying $h$ to both sides of~\eqref{eq:subst}, we infer that
    \begin{align*}
        Z(F_1^{(i)}(\bx), \dots, F_{m+1}^{(i)}(\bx)) = 0.
    \end{align*}
    We conclude by repeating this reasoning $n$ times.
\end{proof}


In the following we introduce some basic terminology about (commutative) fields
(c.f.~\cite[Sec.~II.1]{Nagata:1993}, \cite[Sec.~V.3]{Bourbaki:2003}, or \cite[Sec.~13.1 and 13.2]{DummitFoote:AA:2003} for more details).
Let $\bbE, \bbF$ be two fields.
When $\bbE \subseteq \bbF$ we say that $\bbF$ is a \emph{field extension} of $\bbE$,
which is called the \emph{base field}.
The \emph{degree} of $\bbF$ over $\bbE$, written $\deg_\bbE \bbF$, is the dimension of $\bbF$ as a vector space over the base field $\bbE$.
For instance, $\bbQ(\sqrt 2)$ has degree 2 over $\bbQ$ (its elements can be put in the form $a + b \cdot \sqrt 2$)
and $\bbQ(\sqrt[3]2)$ has degree 3 (its elements can be put in the form $a + b \cdot \sqrt[3] 2 + c \cdot (\sqrt[3] 2)^2$).
Field extensions need not have finite degree.
For instance, $\bbQ(\pi)$ and $\bbQ(x)$ are two field extensions of $\bbQ$ of infinite degree.
The degree is multiplicative:

\begin{lemma}[\protect{c.f.~\cite[Theorem 14]{DummitFoote:AA:2003}}]
    \label{lem:degree:multiplicative}
    Consider field extensions $\bbE \subseteq \bbF \subseteq \bbG$.
    Then, $\deg_\bbE \bbG = \deg_\bbE \bbF \cdot \deg_\bbF \bbG$
    (even for infinite degrees).
\end{lemma}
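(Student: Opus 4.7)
The standard route is to exhibit an explicit basis of $\bbG$ over $\bbE$ obtained by multiplying a basis of $\bbF$ over $\bbE$ with a basis of $\bbG$ over $\bbF$. Concretely, I would fix a basis $\{a_i\}_{i\in I}$ of $\bbF$ as an $\bbE$-vector space and a basis $\{b_j\}_{j\in J}$ of $\bbG$ as an $\bbF$-vector space, and then prove that the family of products $\{a_i b_j\}_{(i,j)\in I\times J}$ is an $\bbE$-basis of $\bbG$. Since $|I\times J|=|I|\cdot|J|$ in the sense of cardinal multiplication, this immediately yields the claimed identity $\deg_\bbE\bbG=\deg_\bbE\bbF\cdot\deg_\bbF\bbG$, and it handles the finite and infinite cases uniformly.

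For the spanning property, I would take an arbitrary $g\in\bbG$ and first write $g=\sum_{j}c_j b_j$ with only finitely many nonzero coefficients $c_j\in\bbF$, using that $\{b_j\}$ spans $\bbG$ over $\bbF$. Then I would expand each $c_j=\sum_{i} d_{ij}a_i$ with $d_{ij}\in\bbE$ and finitely many nonzero terms, using that $\{a_i\}$ spans $\bbF$ over $\bbE$. Substituting gives $g=\sum_{i,j}d_{ij}(a_i b_j)$, an $\bbE$-linear combination of the products.

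The more delicate step, and the one where care is needed, is linear independence: suppose $\sum_{i,j}d_{ij}(a_i b_j)=0$ with $d_{ij}\in\bbE$, almost all zero. Regrouping as $\sum_{j}\bigl(\sum_{i}d_{ij}a_i\bigr)b_j=0$ expresses $0$ as an $\bbF$-linear combination of the $b_j$, so $\bbF$-independence of $\{b_j\}$ forces $\sum_{i}d_{ij}a_i=0$ for every $j$; applying $\bbE$-independence of $\{a_i\}$ then forces every $d_{ij}=0$. This is the only place where both independence hypotheses are genuinely used, and it is the main obstacle in the sense that the argument must be carried out in the correct order to respect the nested field structure.

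To cover infinite degrees, I would argue as follows. If $\deg_\bbE\bbF=\infty$, then any infinite $\bbE$-independent subset of $\bbF\subseteq\bbG$ is also $\bbE$-independent in $\bbG$, so $\deg_\bbE\bbG=\infty$, matching the convention $\infty\cdot m=\infty$ for $m\ge 1$. If instead $\deg_\bbF\bbG=\infty$, then the basis construction above produces infinitely many $\bbE$-independent elements of $\bbG$, again giving $\deg_\bbE\bbG=\infty$. Conversely, if both degrees on the right are finite then the basis $\{a_ib_j\}$ is finite of size $\deg_\bbE\bbF\cdot\deg_\bbF\bbG$, so $\deg_\bbE\bbG$ is finite and equals this product. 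This exhausts all cases.
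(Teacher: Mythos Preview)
Your proof is correct and is precisely the standard argument found in the reference cited by the paper (Dummit--Foote, Theorem 14). The paper itself does not give a proof of this lemma; it merely quotes the result from the literature, so there is no ``paper's own proof'' to compare against beyond noting that your argument matches the textbook source.
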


An element $f \in \bbF$ is \emph{algebraic} over the base field $\bbE$ if there is a nonzero polynomial $P(x) \in \bbE[x]$ s.t.~$P(f) = 0$.
The field extension $\bbF$ is \emph{algebraic} over the base field $\bbE$ if every element in $\bbF$ is algebraic over $\bbE$.
%






Let $\bbF$ be a field extension of $\bbE$.
A subset $\set{f_1, \dots, f_n} \subseteq \bbF$ of elements of $\bbF$ is \emph{algebraically independent} over $\bbE$
if there is no nonzero polynomial $P(x_1, \dots, x_n) \in \bbE[x_1, \dots, x_n]$ such that~$P(f_1, \dots, f_n) = 0$.
The \emph{transcendence degree} of $\bbF$ over $\bbE$, denoted $\trdeg_{\bbE} \bbF$,
is the largest number of elements of $\bbF$ which are algebraically independent over $\bbE$.
Note that $\bbF$ is algebraic over $\bbE$ if and only if $\trdeg_{\bbE} \bbF = 0$.
Like the algebraic degree is multiplicative, the transcendence degree is additive:
\begin{lemma}[\protect{c.f.~\cite[Corollary to Theorem 4, A.5.111]{Bourbaki:2003}}]
    \label{lem:degree:additive}
    Consider field extensions $\bbE \subseteq \bbF \subseteq \bbG$.
    Then, $\trdeg_\bbE \bbG = \trdeg_\bbE \bbF + \trdeg_\bbF \bbG$.
\end{lemma}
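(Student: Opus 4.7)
The plan is to prove the additivity by the standard construction of a transcendence basis for $\bbG$ over $\bbE$ from transcendence bases of the two constituent extensions. Recall first that a transcendence basis of $\bbG$ over $\bbE$ is an algebraically independent subset $B \subseteq \bbG$ over $\bbE$ such that $\bbG$ is algebraic over $\bbE(B)$; all transcendence bases have the same cardinality, which equals $\trdeg_\bbE \bbG$. This is a Steinitz-style exchange fact that I would quote from the standard references already cited in the excerpt.

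Having fixed this vocabulary, I would pick a transcendence basis $B_1 = \set{f_1,\ldots,f_p} \subseteq \bbF$ of $\bbF$ over $\bbE$ (so $p = \trdeg_\bbE \bbF$) and a transcendence basis $B_2 = \set{g_1,\ldots,g_q} \subseteq \bbG$ of $\bbG$ over $\bbF$ (so $q = \trdeg_\bbF \bbG$). Note that $B_1 \cap B_2 = \emptyset$ because every element of $B_2$ is transcendental over $\bbF$ while $B_1 \subseteq \bbF$. The claim to establish is that $B := B_1 \cup B_2$ is a transcendence basis of $\bbG$ over $\bbE$; once this is shown we conclude $\trdeg_\bbE \bbG = |B| = p + q$, as required.

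For algebraic independence of $B$ over $\bbE$, I would argue by contradiction. Suppose a nonzero polynomial $P \in \bbE[x_1,\ldots,x_p,y_1,\ldots,y_q]$ satisfies $P(f_1,\ldots,f_p,g_1,\ldots,g_q) = 0$. Regard $P$ as a polynomial in the variables $y_1,\ldots,y_q$ with coefficients in the subring $\bbE[f_1,\ldots,f_p] \subseteq \bbF$. Substituting $g_1,\ldots,g_q$ for $y_1,\ldots,y_q$ yields a polynomial relation over $\bbF$ satisfied by $B_2$, and since $B_2$ is algebraically independent over $\bbF$, each coefficient must vanish in $\bbF$. But each such coefficient is itself a polynomial over $\bbE$ in $f_1,\ldots,f_p$, and algebraic independence of $B_1$ over $\bbE$ forces the coefficient polynomial to be the zero polynomial in $\bbE[x_1,\ldots,x_p]$. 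Hence $P$ itself is zero, a contradiction.

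For the algebraicity of $\bbG$ over $\bbE(B)$, I would invoke transitivity of algebraic extensions. By choice of $B_2$, the extension $\bbF(B_2) \subseteq \bbG$ is algebraic. By choice of $B_1$, the extension $\bbE(B_1) \subseteq \bbF$ is algebraic, and adjoining the indeterminates $B_2$ to both sides preserves algebraicity, so $\bbE(B_1 \cup B_2) \subseteq \bbF(B_2)$ is algebraic. Composing these two algebraic extensions gives that $\bbG$ is algebraic over $\bbE(B)$. The main conceptual obstacle is the exchange principle underlying well-definedness of the transcendence degree; everything else is essentially formal once one is careful to separate the roles of $B_1$ (living inside the intermediate field $\bbF$) and $B_2$ (transcendental over $\bbF$).
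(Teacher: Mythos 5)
The paper does not prove this lemma; it cites it verbatim from Bourbaki (A.V, \S14, the corollary to Theorem~4), so there is no in-paper argument to compare against. Your proof is correct and is precisely the standard one found in that reference and in most algebra texts: take transcendence bases $B_1$ of $\bbF/\bbE$ and $B_2$ of $\bbG/\bbF$, show $B_1\cup B_2$ is a transcendence basis of $\bbG/\bbE$, and count. The two key steps you supply --- the two-stage coefficient-vanishing argument for algebraic independence of $B_1\cup B_2$ over $\bbE$, and the transitivity of algebraicity via the tower $\bbE(B_1\cup B_2)\subseteq \bbF(B_2)\subseteq \bbG$ --- are exactly right, and you correctly observe the disjointness $B_1\cap B_2=\emptyset$ needed for the cardinality bookkeeping. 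One minor caveat if one cares about full generality: you write the bases as finite tuples $\set{f_1,\ldots,f_p}$ and $\set{g_1,\ldots,g_q}$, whereas the statement is true for arbitrary (possibly infinite) transcendence degrees; the argument goes through unchanged if you phrase the independence step in terms of polynomials involving only finitely many of the basis elements. In the context of this paper, where everything lives inside $\bbQ(x_1,\ldots,x_k)$ and transcendence degrees are bounded by $k$, the finite version suffices.
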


In the following we will always take as the base field $\bbE = \bbQ$,
in which case we will write just $\trdeg \bbF$ instead of $\trdeg_\bbQ \bbF$.
For example, $\trdeg \bbQ(\sqrt 2) = 0$ because $\sqrt 2$ is an algebraic number over $\bbQ$,
$\trdeg \bbQ(\sqrt 2, \pi) = 1$ because $\pi$ is a transcendental number,
and $\trdeg \bbQ(x_1, \dots, x_n) = n$.



Given a field extension $\bbF$ over $\bbE$ and elements $f_1, \dots, f_n \in \bbF$,
let $\bbE(f_1, \dots, f_n)$ be the smallest field extension over $\bbE$ containing $f_1, \dots, f_n$.
If $\bbF = \bbE(f_1, \dots, f_n)$, then we say that $\bbF$ is \emph{finitely generated over $\bbE$}
(with \emph{generators} $f_1, \dots, f_n$).

The motivation to look at field extensions is that a ratrec system naturally defines the following sequence of field extensions
\begin{align}\label{eq:chain}
    \bbQ \subseteq \bbF_0 \subseteq \bbF_1 \subseteq \ldots \subseteq \bbQ(\bx),
\end{align}
where $\bbF_0=\bbQ(x_1)$ and $\bbF_{n+1} = \bbF_n(F_{n+1}^{(1)}(\bx))$
for $n \in \bbN$.

\subsection{Ascending sequences of field extensions}
\label{sec:transcendence:A}

In this section we prove the following Noether-like result.

\begin{theorem}
    \label{thm:noether}
    Consider any ascending sequence of field extensions of the form
    \begin{align*}
        \bbQ \subseteq \bbF_0 \subseteq \bbF_1 \subseteq \ldots \subseteq \bbQ(x_1,\ldots, x_k).
    \end{align*}
    Then the sequence eventually stabilises: there exists $n_0$ such that~$\bbF_{n_0} = \bbF_{n_0+1} =\bbF_{n_0+2} = \ldots$.
\end{theorem}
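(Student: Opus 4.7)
The plan is to split the proof into two phases according to the two numerical invariants that measure the ``size'' of a subfield of $\bbQ(\bx)$: the transcendence degree $\trdeg \bbF_n$ (which measures how many algebraically independent elements $\bbF_n$ contains) and the algebraic degree over a well-chosen base.

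First I would observe that $\trdeg \bbF_n$ is a nondecreasing sequence of integers bounded above by $\trdeg \bbQ(\bx) = k$, because each $\bbF_n \subseteq \bbF_{n+1}$ and, by \cref{lem:degree:additive} applied to $\bbQ \subseteq \bbF_n \subseteq \bbQ(\bx)$, we have $\trdeg \bbF_n \le k$. Hence there exists $n_1 \le k$ and $t \le k$ such that $\trdeg \bbF_n = t$ for all $n \ge n_1$. Applying \cref{lem:degree:additive} to $\bbF_n \subseteq \bbF_{n+1} \subseteq \bbQ(\bx)$ then yields $\trdeg_{\bbF_n} \bbF_{n+1} = 0$ for $n \ge n_1$; in other words, from $n_1$ on each extension $\bbF_n \subseteq \bbF_{n+1}$ is algebraic.

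Next I would reduce the stabilisation problem for algebraic extensions to a bounded-degree argument. Choose a transcendence basis $v_1, \dots, v_t$ of $\bbF_{n_1}$ over $\bbQ$ and set $\bbL = \bbQ(v_1, \dots, v_t)$; then each $\bbF_n$ with $n \ge n_1$ is algebraic over $\bbL$, hence lies in the relative algebraic closure $\bbK$ of $\bbL$ inside $\bbQ(\bx)$. The crucial claim is that $[\bbK : \bbL] < \infty$. To see this, extend $v_1, \dots, v_t$ to a transcendence basis $v_1, \dots, v_k$ of $\bbQ(\bx)$ over $\bbQ$. Since $\bbQ(\bx)$ is finitely generated over $\bbQ$, it is a finite algebraic extension of $\bbQ(v_1, \dots, v_k)$, say of degree $d$. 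Now $\bbL \subseteq \bbQ(v_1, \dots, v_k)$ is purely transcendental while $\bbL \subseteq \bbK$ is algebraic; these extensions are linearly disjoint over $\bbL$, whence
\[
[\bbK : \bbL] \;=\; [\bbK \cdot \bbQ(v_1, \dots, v_k) : \bbQ(v_1, \dots, v_k)] \;\le\; d,
\]
which is finite.

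Finally, for all $n \ge n_1$ we have $\bbL \subseteq \bbF_n \subseteq \bbK$, and by \cref{lem:degree:multiplicative} the degree $[\bbF_n : \bbL]$ is a positive integer bounded by $[\bbK : \bbL] \le d$. A strict inclusion $\bbF_n \subsetneq \bbF_{n+1}$ would force $[\bbF_{n+1} : \bbL] > [\bbF_n : \bbL]$, again by \cref{lem:degree:multiplicative}. Since the sequence $[\bbF_n : \bbL]$ is a bounded sequence of positive integers, it must eventually be constant, giving the desired stabilisation $\bbF_{n_0} = \bbF_{n_0 + 1} = \cdots$ for some $n_0 \ge n_1$.

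The main obstacle is the finiteness claim $[\bbK : \bbL] < \infty$, which is where the hypothesis that the chain lies inside a \emph{finitely generated} field $\bbQ(\bx)$ is really used; all earlier steps are soft and purely dimension-theoretic. The argument via linear disjointness of a purely transcendental and an algebraic extension is standard, but it is where one must be careful to invoke the correct tower formula, as naive multiplicativity over intermediate fields like $\bbL \subseteq \bbQ(\bv) \subseteq \bbQ(\bx)$ would give infinite degrees that do not bound $[\bbK : \bbL]$.
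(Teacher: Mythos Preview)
Your proof is correct and follows the same two-phase skeleton as the paper's: first stabilise the transcendence degree, then bound the algebraic degree over a suitable base. The genuine difference lies in how you establish the finiteness of the algebraic part. The paper takes $\bbF_\infty = \bigcup_n \bbF_n$ and invokes \cref{lem:bourbaki} (every subextension of a finitely generated extension is finitely generated) to conclude that $\bbF_\infty$ is finitely generated, hence finite, over $\bbF_{n_1}$; the chain is then an ascending chain of $\bbF_{n_1}$-subspaces of a finite-dimensional space. You instead pick a transcendence basis, pass to the relative algebraic closure $\bbK$ of $\bbL=\bbQ(v_1,\dots,v_t)$ inside $\bbQ(\bx)$, and bound $[\bbK:\bbL]$ via linear disjointness of the algebraic extension $\bbK/\bbL$ against the purely transcendental extension $\bbQ(v_1,\dots,v_k)/\bbL$. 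In effect you are unrolling a proof of (a special case of) \cref{lem:bourbaki} rather than citing it. Your route is more explicit and yields a concrete bound $[\bbK:\bbL]\le [\bbQ(\bx):\bbQ(v_1,\dots,v_k)]$; the paper's route is shorter because it outsources exactly this step to Bourbaki.

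Two small remarks. First, the claim ``there exists $n_1\le k$'' is not warranted for an \emph{arbitrary} ascending chain in $\bbQ(\bx)$: the transcendence degree is bounded by $k$ and nondecreasing, so it stabilises, but the stabilisation index need not be at most $k$ (the chain could stay at transcendence degree $0$ for a million steps before jumping). This does not affect your argument, which only needs the existence of $n_1$. Second, in the linear disjointness step you implicitly use that $v_{t+1},\dots,v_k$ remain algebraically independent over $\bbK$ (not just over $\bbL$), and the displayed equality $[\bbK:\bbL]=[\bbK\cdot\bbQ(v_1,\dots,v_k):\bbQ(v_1,\dots,v_k)]$ is cleanest to justify by first bounding every \emph{finite} subextension $\bbL\subseteq\bbM\subseteq\bbK$ and then concluding $[\bbK:\bbL]\le d$. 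Both points are routine, but worth making explicit.
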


The crucial reason for the result above is that the number $k$ of variables is fixed.
In the proof of \cref{thm:noether} we use the following result on finitely generated extensions.

\begin{lemma}[\protect{c.f.~\cite[A.5.118, Cor.~3]{Bourbaki:2003}}]
    \label{lem:bourbaki}
    If $\bbG$ is a finitely generated extension over $\bbE$,
    then every subextension $\bbE \subseteq \bbF \subseteq \bbG$ of $\bbG$ over $\bbE$ is also finitely generated.
\end{lemma}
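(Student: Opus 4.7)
The plan is to derive finite generation of $\bbF$ over $\bbE$ by working inside $\bbG$ with a carefully chosen transcendence basis, and reducing to the elementary fact that a finitely generated algebraic extension is of finite degree. Write $\bbG = \bbE(\alpha_1, \ldots, \alpha_n)$. Since $\trdeg_\bbE \bbG \le n$ is finite, by \cref{lem:degree:additive} also $r := \trdeg_\bbE \bbF \le n$ is finite. First I would pick a transcendence basis $T = \{t_1, \ldots, t_r\}$ of $\bbF/\bbE$, so that $\bbF/\bbE(T)$ is algebraic; then extend $T$ inside $\bbG$ (using the exchange property for transcendence bases and the fact that $\{\alpha_1, \ldots, \alpha_n\}$ generates the transcendence structure) to a full transcendence basis $T' = T \cup \{u_1, \ldots, u_s\}$ of $\bbG/\bbE$.

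Next I would exploit that $\bbG/\bbE(T')$ is both algebraic and finitely generated (again by $\alpha_1, \ldots, \alpha_n$), hence of finite degree $d := [\bbG : \bbE(T')] < \infty$. The goal is then to transfer this finiteness down to $[\bbF : \bbE(T)]$. To do this I would show that adjoining the algebraically independent tuple $\{u_1, \ldots, u_s\}$ to $\bbF$ preserves the degree over $\bbE(T)$, namely
\[
[\bbF(u_1,\ldots,u_s) : \bbE(T)(u_1,\ldots,u_s)] \;=\; [\bbF : \bbE(T)].
\]
This uses two standard observations: (i) the $u_i$'s remain algebraically independent over $\bbF$ because they are algebraically independent over $\bbE(T)$ and $\bbF/\bbE(T)$ is algebraic; and (ii) a $\bbE(T)$-basis of $\bbF$ stays linearly independent over $\bbE(T')$ precisely because the $u_i$'s are transcendental over $\bbF$, so no nontrivial relation with coefficients in $\bbE(T')$ can arise.

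Combining these two points, I get
\[
[\bbF : \bbE(T)] \;=\; [\bbF(T') : \bbE(T')] \;\le\; [\bbG : \bbE(T')] \;=\; d.
\]
Thus $\bbF/\bbE(T)$ is a finite-degree algebraic extension, so it is generated over $\bbE(T)$ by finitely many elements (e.g.\ any $\bbE(T)$-basis of $\bbF$, or a primitive element in characteristic zero). Adjoining $T$ to such a finite generating set yields a finite generating set of $\bbF$ over $\bbE$, completing the proof.

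The main obstacle will be the degree-preservation step (ii): I need to verify cleanly that linear independence over $\bbE(T)$ propagates to linear independence over the purely transcendental extension $\bbE(T')$. Concretely, a dependency $\sum_j f_j(u_1,\ldots,u_s)\, b_j = 0$ with $f_j \in \bbE(T')$ and $b_j \in \bbF$ an $\bbE(T)$-basis of $\bbF$ can be cleared of denominators and read as a polynomial identity in $\bbE(T)[u_1,\ldots,u_s]$; comparing monomials in the $u_i$'s (which are algebraically independent over $\bbF$) forces every coefficient of every $b_j$ to vanish, contradicting the assumed $\bbE(T)$-independence of the $b_j$'s. Once this technical step is in place, the remainder of the argument is essentially bookkeeping on transcendence bases.
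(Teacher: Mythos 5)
Your proof is correct, and it is essentially the standard textbook argument: pick a transcendence basis $T$ of $\bbF/\bbE$, complete it inside $\bbG$ to a transcendence basis $T'$ of $\bbG/\bbE$, use that $\bbG$ is a finite extension of $\bbE(T')$ (finitely generated algebraic), and then descend that finite bound to $[\bbF : \bbE(T)]$ via the base-change step. The technical point you flag at the end — that $\bbE(T)$-linear independence of elements of $\bbF$ survives extension of scalars to $\bbE(T')$, because the $u_i$'s remain algebraically independent over $\bbF$ — is exactly the crux, and your coefficient-comparison argument handles it correctly; note you only need the inequality $[\bbF:\bbE(T)] \le [\bbF(T'):\bbE(T')]$, not the equality you state, and that is fine since a priori you do not yet know $[\bbF:\bbE(T)]$ is finite.

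One small remark for comparison: the paper does not prove this lemma at all; it is imported as a black box with a citation to Bourbaki (A.V, \S 14, where the statement about subextensions of finitely generated extensions appears). So there is no ``paper proof'' to compare against, but your argument is the expected one and would serve as a self-contained substitute for the citation. Two cosmetic points worth tightening if you wrote this out formally: (a) when you speak of an ``$\bbE(T)$-basis of $\bbF$'' before finiteness is established, phrase it as an arbitrary $\bbE(T)$-linearly independent family to avoid circularity; and (b) the closing parenthetical about primitive elements is unnecessary and implicitly assumes separability — a finite $\bbE(T)$-basis already gives a finite generating set, so you can drop that aside.
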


\begin{proof}[Proof of \cref{thm:noether}]
    First of all, observe that
    \begin{align*}
        \trdeg \bbF_n \leq \trdeg \bbQ(x_1,\ldots,x_k)= k,\quad\textrm{for all }n.
    \end{align*}
    Hence, there is $n_1$ such that~$\trdeg \bbF_{n_1} = \trdeg \bbF_{n_1 + 1} = \cdots$.
    Let $\bbF_\infty \coloneqq \bigcup_{n = 0}^\infty \bbF_n$ and consider the ascending sequence
    \begin{align}
        \label{eq:extensions}
        \bbF_{n_1} \subseteq \bbF_{n_1+1} \subseteq \cdots \subseteq \bbF_\infty.
    \end{align}
    %
    %
    We have $\trdeg \bbF_{n_1} = \trdeg \bbF_{n_1+i}$ for all $i>0$,
    and, by \cref{lem:degree:additive},
    $\trdeg_{\bbF_{n_1}} \bbF_{n_1 + i} = 0$, i.e., $\bbF_{n_1+i}$ is algebraic over $\bbF_{n_1}$.
    Moreover, $\bbF_{\infty}$ is also algebraic over $\bbF_{n_1}$ because any element of $\bbF_{\infty}$ belongs to some $\bbF_{n_1+i}$.
    Since $\bbF_{n_1} \subseteq \bbQ(x_1,\ldots,x_k)$ is a finitely generated extension of $\bbF_{n_1}$
    and $\bbF_{n_1} \subseteq \bbF_\infty \subseteq \bbQ(x_1,\ldots,x_k)$ is a subextension of $\bbF_{n_1}$,
    by \cref{lem:bourbaki} we have that $\bbF_\infty$ is also a finitely generated extension of $\bbF_{n_1}$.
    In other words, there are generators $f_1, \dots, f_m \in \bbF_\infty$
    such that
    \begin{align*}
\bbF_\infty = \bbF_{n_1}(f_1, \dots, f_m).
              \end{align*}
    Since the generators $f_1,\ldots,f_m$ are algebraic over $\bbF_{n_1}$,
    $\bbF_\infty$ is an algebraic extension of finite degree over $\bbF_{n_1}$ by \cref{lem:degree:multiplicative}.
    (Concretely, an upper bound for the degree is the product of the degrees of minimal polynomials of the generators $f_1, \dots, f_n$.)
    It follows that the sequence in \eqref{eq:extensions} is an ascending sequence of vector subspaces of $\bbF_{\infty}$, where we treat $\bbF_{\infty}$ as a vector space over~$\bbF_{n_1}$. Since the dimension of $\bbF_{\infty}$ as a vector space over $\bbF_{n_1}$ is finite, this sequence must eventually stabilize at $\bbF_{n_0}$ for some $n_0\geq n_1$.
\end{proof}


We now prove the existence part of \cref{thm:simple:ratrec} using \cref{thm:noether}.

\begin{proof}[\protect{Proof (of the first part of \cref{thm:simple:ratrec})}]
    By \cref{thm:noether}, the sequence in \eqref{eq:chain} stabilizes at some $\bbF_m$,
    that is, \[\bbF_m = \bbF_{m+1} = \bbF_{m}(F_{m+1}^{(1)}(\bx)).\]
    Therefore, we have $F_{m+1}^{(1)}(\bx) \in \bbF_{m}$. Noting that $\bbF_m=\bbQ(F_0^{(1)}(\bx),\ldots,F_m^{(1)}(\bx))$, we see that $F_{m+1}^{(1)}(\bx)$ can be expressed as a rational function of the generators:
    There exists a rational function $R \in \bbQ(y_0, \dots, y_m)$ such that
    \[F_{m+1}^{(1)}(\bx) = R(F_0^{(1)}(\bx), \dots, F_m^{(1)}(\bx)).\]
    We may now apply \cref{lem:substitution} to the numerator of the rational function $R(y_0,\ldots,y_m)-y_{m+1}$, thus obtaining that
    \begin{align*}
        F_{n + m + 1}^{(1)}(\bx) = R(F_n^{(1)}(\bx), \dots, F_{n + m}^{(1)}(\bx)),\quad \textrm{for every }n\in \bbN.
    \end{align*}
\end{proof}


\subsection{Upper bound}
\label{sec:transcendence:B}

We now move to the second, quantitative part of the proof of \cref{thm:simple:ratrec}: we need to prove that $m$ is bounded from above by $k+k^3 \log (kD)$. For this, we inspect the proof of \cref{thm:noether} in the special case of the chain of extensions~\eqref{eq:chain} given by a ratrec system. The first observation is that the sequence of transcendence degrees stabilises very quickly.

\begin{lemma}
    \label{lem:trdeg}
    The transcendence degrees $\trdeg \bbF_n$ of the sequence \eqref{eq:chain} stabilise after at most $k$ steps.
\end{lemma}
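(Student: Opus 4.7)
The plan is to combine two simple facts about transcendence degree with a finite monotone convergence argument. First, along any tower of fields transcendence degree is additive (\cref{lem:degree:additive}), so applied to $\bbQ \subseteq \bbF_n \subseteq \bbF_{n+1}$ it gives
\[\trdeg \bbF_{n+1} = \trdeg \bbF_n + \trdeg_{\bbF_n} \bbF_{n+1} \geq \trdeg \bbF_n,\]
hence the sequence $(\trdeg \bbF_n)_{n\in\bbN}$ is non-decreasing. Second, applied to $\bbQ \subseteq \bbF_n \subseteq \bbQ(\bx)$ the same lemma yields the uniform upper bound $\trdeg \bbF_n \leq \trdeg \bbQ(\bx) = k$.

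Combining these two facts, $(\trdeg \bbF_n)_{n\in\bbN}$ is a non-decreasing sequence of integers lying in $\{0,1,\ldots,k\}$. Such a sequence can strictly increase at most $k$ times, so there exists $n_0 \leq k$ with $\trdeg \bbF_{n_0} = \trdeg \bbF_{n_0+1} = \cdots$, which is exactly the claim.

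I do not expect any real obstacle for this particular lemma: the argument is purely dimensional and makes no use of the ratrec structure of the chain, only of the fact that the enveloping field $\bbQ(\bx)$ has transcendence degree $k$ over $\bbQ$. The genuine quantitative work in the remainder of \cref{thm:simple:ratrec} will come in the next step, where \emph{after} the transcendence degree has stabilised one still has to bound the length of the remaining tower of algebraic extensions $\bbF_{n_0} \subseteq \bbF_{n_0+1} \subseteq \cdots$. That is an algebraic-degree question rather than a transcendence-degree one, and it is there that the degree growth estimate from \cref{lem:ratrec:degree:growth} together with results on algorithms for rational function fields will enter and produce the factor $k^3 \log(kD)$ in the final bound.
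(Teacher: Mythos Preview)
Your argument has a genuine gap in the last step. From the two facts you correctly establish --- that $(\trdeg \bbF_n)_n$ is non-decreasing and bounded above by $k$ --- you can indeed conclude that the sequence strictly increases at most $k$ times. But this does \emph{not} imply that there exists $n_0 \leq k$ at which it stabilises. A non-decreasing integer sequence in $\{0,\ldots,k\}$ may have arbitrarily long plateaus before its final value; for instance $0,0,\ldots,0,1,1,\ldots$ with the jump at position $10^6$ is non-decreasing in $\{0,1\}$ and increases only once, yet stabilises only at $n_0 = 10^6$. Bounding the number of jumps does not bound the index of the last jump.

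The missing ingredient is exactly the one you dismissed: the ratrec structure of the chain. The paper's proof uses the Substitution Lemma (\cref{lem:substitution}) to show that the first plateau is permanent. Concretely, if $\trdeg \bbF_{m+1} = \trdeg \bbF_m$, then $F^{(1)}_{m+1}(\bx)$ is algebraic over $\bbF_m$, which yields a nonzero polynomial relation $Z(F^{(1)}_0(\bx),\ldots,F^{(1)}_{m+1}(\bx)) = 0$. The Substitution Lemma then propagates this relation along the sequence, giving $Z(F^{(1)}_n(\bx),\ldots,F^{(1)}_{n+m+1}(\bx)) = 0$ for all $n$, so every later $F^{(1)}_{n+m+1}(\bx)$ is algebraic over $\bbF_{n+m}$ and the transcendence degree never rises again. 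Combined with your observation that each step increases the transcendence degree by at most one, this forces stabilisation within $k$ steps. Your remark that ``the argument \ldots\ makes no use of the ratrec structure'' is therefore precisely where the proof breaks; for an arbitrary ascending chain $\bbQ \subseteq \bbF_0 \subseteq \bbF_1 \subseteq \cdots \subseteq \bbQ(\bx)$ the stated bound is simply false.
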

\begin{proof}
    As argued, $\trdeg \bbF_n \leq \trdeg \bbQ(x_1,\ldots,x_k)=k$ for all $n$.
    The next extension $\bbF_{n+1}$ is obtained by adding a new rational function $F_{n+1}^{(1)}(x_1, \dots, x_k)$ to the previous extension $\bbF_n$.
    This immedately shows that $\trdeg \bbF_n \leq \trdeg \bbF_{n+1} \leq \trdeg \bbF_n + 1$.
    We argue that if $\trdeg \bbF_{m+1} = \trdeg \bbF_m$ for some $m$,
    then the transcendence degree cannot change anymore: $\trdeg \bbF_m=\trdeg \bbF_{m+1}=\trdeg \bbF_{m+2}=\cdots$. Note that this will conclude the proof, because then the transcendence degree can increase at most $k$ times before eventually stabilizing.
    
    Since $\trdeg \bbF_{m+1} = \trdeg \bbF_m$,
    it follows that $F_{m+1}^{(1)}(\bx)$ is algebraic over $\bbF_m$,
    which means that it satisfies $P(F_{m+1}^{(1)}(\bx)) = 0$ for some nonzero polynomial $P(x) \in \bbF_n[x]$.
    By clearing out denominators, there is a nonzero polynomial $Z(y_0, \dots, y_{m+1}) \in \bbQ[y_0, \dots, y_{m+1}]$ such that
    \begin{align}
        \label{eq:expr}
        Z(F_0^{(1)}(\bx), \dots, F_{m+1}^{(1)}(\bx)) = 0.
    \end{align}
    By \cref{lem:substitution} we have 
    \begin{align*}
Z(F_n^{(1)}(\bx), \dots, F_{n+m+1}^{(1)}(\bx)) = 0\quad\textrm{for every }n \in \bbN.     
    \end{align*}
    This means that $F_{n+m+1}^{(1)}(\bx)$ is algebraic over $\bbF_{n+m}$, implying
    $$\trdeg \bbF_{n+m+1}=\trdeg \bbF_{n+m}(F^{(1)}_{n+m+1}(\bx)) =\trdeg \bbF_{n+m}.$$
    This concludes the proof.
\end{proof}



Note that even when the transcendence degrees of the fields in~\eqref{eq:chain} stabilise, it may still take several  further steps until the fields themselves eventually stabilise. We will later give an example that this may indeed happen. 

We are left with estimating the degrees of field extensions after the transcendence degree in the chain~\eqref{eq:chain} stabilises. For this, we use the following two results.

\begin{lemma}[\protect{c.f.~\cite[Lemma 3.4]{Muller-Quade:Steinwandt:JSC:1999}}]
    \label{lem:Muller-Quade:Steinwandt}
    Let $f \in \bbQ(x_1, \dots, x_k)$ be algebraic over $$\bbF_n = \bbQ(F_0(x_1, \dots, x_k), \dots, F_n(x_1, \dots, x_k)).$$
    Then there is a polynomial $Z(x) \in \bbF_n[x]$ of degree at most $\deg F_0 \cdots \deg F_n$ s.t.~$Z(f) = 0$.
\end{lemma}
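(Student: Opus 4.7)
The plan is to bound the degree of the minimal polynomial $Z$ of $f$ over $\bbF_n$ by a Bézout-type counting argument on an explicit polynomial system derived from rational presentations of the generators $F_0, \ldots, F_n$.

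\textbf{Reduction to a finite extension.} Write each generator as $F_i = A_i/B_i$ with $A_i, B_i \in \bbQ[\bx]$ of degree at most $D_i := \deg F_i$. Since $f$ is algebraic over $\bbF_n$, its minimal polynomial $Z \in \bbF_n[t]$ exists, and its degree equals $[\bbF_n(f) : \bbF_n]$. Let $\bbK$ denote the relative algebraic closure of $\bbF_n$ inside $\bbQ(\bx)$; then $f \in \bbK$, and by \cref{lem:bourbaki} applied to the finitely generated extension $\bbF_n \subseteq \bbQ(\bx)$, together with the fact that $\bbK$ is algebraic over $\bbF_n$, we have that $\bbK$ is a finite-degree algebraic extension of $\bbF_n$. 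It therefore suffices to show $[\bbK : \bbF_n] \le \prod_{i=0}^n D_i$.

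\textbf{Bézout bound in the full-transcendence case.} Suppose first that $\trdeg \bbF_n = k$, so that $\bbK = \bbQ(\bx)$ and $[\bbQ(\bx) : \bbF_n]$ is finite. Substituting the known values $y_i := F_i \in \bbF_n$ into the polynomials $B_i(\bx) \cdot y_i - A_i(\bx)$ produces a system of $n+1$ polynomials in $\bbF_n[\bx]$ of degrees $\le D_i$ that all vanish at the tuple $(x_1, \ldots, x_k) \in \bbQ(\bx)^k$. Because $\bbQ(\bx)/\bbF_n$ is algebraic, the common zero set of this system over $\overline{\bbF_n}^k$ is zero-dimensional; selecting any $k$ equations forming a complete intersection, the affine Bézout theorem bounds the number of common solutions by the product of their degrees, which is in turn at most $\prod_{i=0}^n D_i$. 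Each $\bbF_n$-embedding $\bbQ(\bx) \hookrightarrow \overline{\bbF_n}$ corresponds to one such solution (with separability automatic in characteristic zero), giving $[\bbQ(\bx) : \bbF_n] \le \prod D_i$ and hence the claim.

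\textbf{General case and main obstacle.} When $\trdeg \bbF_n < k$, I would complete a transcendence basis by adjoining an appropriate subset $x_{j_1}, \ldots, x_{j_s}$ of the original variables to obtain an intermediate field $\bbL := \bbF_n(x_{j_1}, \ldots, x_{j_s})$ with $\trdeg \bbL = k$. The previous step then applies to $\bbL$, with the new generators having degrees $D_0, \ldots, D_n, 1, \ldots, 1$, and yields $[\bbQ(\bx) : \bbL] \le \prod D_i$. The main expected obstacle is the transfer back to $\bbF_n$: since the adjoined elements are transcendental over $\bbF_n$, one must argue that $\bbK \cdot \bbL$ is an intermediate subfield of $\bbQ(\bx)/\bbL$ whose degree over $\bbL$ equals $[\bbK : \bbF_n]$, i.e.~that no collapse of the algebraic degree occurs under this purely transcendental base change. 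Granting this (which is the technical content of Lemma 3.4 in~\cite{Muller-Quade:Steinwandt:JSC:1999}), one concludes $[\bbK : \bbF_n] = [\bbK \cdot \bbL : \bbL] \le [\bbQ(\bx) : \bbL] \le \prod D_i$, completing the proof.
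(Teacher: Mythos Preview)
The paper does not supply its own proof of this lemma; it is quoted from \cite[Lemma~3.4]{Muller-Quade:Steinwandt:JSC:1999} and used as a black box in the proof of \cref{lem:stabilization}. So there is nothing in the paper to compare your argument against.

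That said, your sketch is along the right lines, with two comments. The step you single out as the ``main expected obstacle'' --- preservation of $[\bbK:\bbF_n]$ under the purely transcendental base change $\bbF_n\subseteq\bbL$ --- is actually routine: an algebraic extension and a purely transcendental extension of the same base are linearly disjoint (in characteristic zero, a minimal polynomial over $\bbF_n$ stays irreducible over $\bbF_n(t_1,\dots,t_s)$ by Gauss's lemma), so $[\bbK\cdot\bbL:\bbL]=[\bbK:\bbF_n]$ holds directly and need not be deferred to the cited paper. Conversely, the B\'ezout step is looser than you present it: ``selecting any $k$ equations forming a complete intersection'' is not guaranteed to give a zero-dimensional set, since the hypersurfaces $B_i(\bx)\,y_i-A_i(\bx)=0$ may share positive-dimensional components where $A_i$ and $B_i$ vanish simultaneously. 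It is cleaner to first choose $k$ algebraically independent generators $F_{i_1},\dots,F_{i_k}$ (possible since $\trdeg\bbF_n=k$), pass to the subfield $\bbG=\bbQ(F_{i_1},\dots,F_{i_k})\subseteq\bbF_n$, and bound $[\bbQ(\bx):\bbF_n]\le[\bbQ(\bx):\bbG]$; the right-hand side is the degree of the dominant rational map $\bx\mapsto(F_{i_1}(\bx),\dots,F_{i_k}(\bx))$ from $\mathbb{A}^k$ to itself, which is classically bounded by $\prod_j D_{i_j}\le\prod_i D_i$.
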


\begin{lemma}[\protect{c.f.~\cite[Exercise III.A.2]{Nagata:1993}}]
    \label{lem:Nagata}
    Let $\bbQ \subseteq \bbF \subseteq \bbQ(x_1, \dots, x_k)$ be a subextension of $\bbQ(x_1, \dots, x_k)$ over $\bbQ$
    of transcendence degree $r \coloneqq \trdeg_\bbQ \bbF$.
    Then there are (algebraically independent) rational functions $f_1, \dots, f_r \in \bbQ(x_1, \dots, x_k)$ such that~$\bbF \subseteq \bbQ(f_1, \dots, f_r)$.
\end{lemma}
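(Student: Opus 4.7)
The plan has three parts: (1) reduce to the finitely generated case, (2) apply the primitive element theorem to obtain a convenient normal form for $\bbF$, and (3) construct the $f_i$'s via a factorization argument.

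For part (1), by \cref{lem:bourbaki} applied to $\bbQ \subseteq \bbF \subseteq \bbQ(x_1,\ldots,x_k)$, the field $\bbF$ is finitely generated over $\bbQ$; write $\bbF = \bbQ(h_1,\ldots,h_s)$ with $h_i \in \bbQ(x_1,\ldots,x_k)$. After rearranging and possibly replacing some of them by polynomial combinations, we may assume that $h_1,\ldots,h_r$ form a transcendence basis of $\bbF$ over $\bbQ$, while $h_{r+1},\ldots,h_s$ are algebraic over $\bbQ(h_1,\ldots,h_r)$. For part (2), by the primitive element theorem (valid in characteristic zero), the finite algebraic extension $\bbF/\bbQ(h_1,\ldots,h_r)$ is generated by a single element $\theta \in \bbF$, so $\bbF = \bbQ(h_1,\ldots,h_r)(\theta)$ with $\theta$ algebraic of some degree $d$ over $\bbQ(h_1,\ldots,h_r)$. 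This reduces the problem to producing $f_1,\ldots,f_r \in \bbQ(x_1,\ldots,x_k)$ such that both the transcendence basis $h_1,\ldots,h_r$ and the single algebraic generator $\theta$ all lie in $\bbQ(f_1,\ldots,f_r)$.

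For part (3), the geometric picture guides the construction: identifying $\bbF$ with the function field $\bbQ(X)$ of an $r$-dimensional variety $X$ over $\bbQ$, the inclusion $\bbF \subseteq \bbQ(x_1,\ldots,x_k)$ corresponds to a dominant rational map $\phi\colon \mathbb{A}^k \dashrightarrow X$, and the desired $f_i$'s encode a dominant rational map $\psi = (f_1,\ldots,f_r)\colon \mathbb{A}^k \dashrightarrow \mathbb{A}^r$ through which $\phi$ factors as $\phi = \pi \circ \psi$, where $\pi\colon \mathbb{A}^r \dashrightarrow X$ is dominant and generically finite (since $\dim \mathbb{A}^r = r = \dim X$). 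In this picture, the existence of $f_1,\ldots,f_r$ amounts to witnessing the unirationality of $X$ by a cover factoring through the given embedding, which amounts to arranging that the generic fibers of $\psi$ coincide with those of $\phi$. To produce such $f_i$'s defined over $\bbQ$, I would parameterize candidate tuples (for example, rational functions of bounded degree whose form is adapted to $\theta$ and the $h_i$'s) by a $\bbQ$-rational variety $Y$, show via a dimension count over $\bar\bbQ$ that the locus in $Y$ yielding a valid factorization is Zariski-open and nonempty, and then extract a $\bbQ$-point of this locus using density of $\bbQ$-rational points in rational parameter spaces.

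The main obstacle is making the construction in part (3) both rigorous and effective. Naive choices — taking $f_i$ to be the $h_i$'s themselves or generic $\bbQ$-linear combinations of them — fail, since they only produce subfields of $\bbF$ rather than extensions containing it; one instead needs $f_i$'s lying outside $\bbF$ in a controlled way, leveraging the algebraic generator $\theta$ to encode all of $\bbF$ into $r$ algebraically independent functions. Moreover, over $\bbQ$ (as opposed to $\bar\bbQ$) one must verify that $\bbQ$-rational choices actually produce a valid factorization, which requires a careful density argument in the parameter space rather than a purely algebraic-geometric dimension count. The classical reference \cite[Exercise III.A.2]{Nagata:1993} carries out this construction in full.
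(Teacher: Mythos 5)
The paper does not prove this lemma; it cites it to Nagata's book (Exercise III.A.2), so there is no in-paper argument to compare against. Your parts (1) and (2) are sound and standard: finite generation of $\bbF$ follows from \cref{lem:bourbaki}, and the primitive element theorem (valid over $\bbQ$) lets you write $\bbF = \bbQ(h_1,\ldots,h_r,\theta)$ with $h_1,\ldots,h_r$ a transcendence basis and $\theta$ a single algebraic generator.

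Part (3), where the substance lives, is a strategy with an unfilled gap, as you yourself acknowledge. Two concrete issues. First, the dimension-count-over-$\bar{\bbQ}$ argument for nonemptiness is circular: a dimension count could establish nonemptiness of the factorization locus only if you already knew a factorization exists over $\bar{\bbQ}$, but that is precisely the nontrivial content of the lemma (it is no easier over $\bar{\bbQ}$). Generic tuples $(f_1,\ldots,f_r)$ of bounded-degree rational functions emphatically do not factor $\phi$, so the locus you want is thin in any naive parameter space, and it is unclear what would be counted. Second, it is far from clear that the condition ``$\phi$ factors through $\psi=(f_1,\ldots,f_r)$'' is Zariski-open on a parameter space of bounded-degree tuples: it is an existence condition, naturally a projection of a closed incidence correspondence, hence constructible at best, and the openness needed for a $\bbQ$-density argument is not justified. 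A further minor inaccuracy: the factorization requires that the fibers of $\psi$ refine (are contained in) the fibers of $\phi$, not that they coincide, and ``unirationality of $X$'' is already witnessed by $\phi$ itself, so that framing does not capture what is actually missing, namely a dominant map $\mathbb{A}^k \dashrightarrow \mathbb{A}^r$ through which $\phi$ factors. In short, the proposal correctly normalizes the data and sets up a reasonable geometric picture, but the heart of the argument is deferred to the cited reference rather than proved.
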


\begin{lemma}\label{lem:stabilization}
    The sequence \eqref{eq:chain} eventually stabilizes after at most $k + k^3 \log (kD)$ steps.
\end{lemma}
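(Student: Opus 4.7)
The plan is to follow the structure of the existence proof but to quantify each step. By \cref{lem:trdeg}, the transcendence degree $\trdeg \bbF_n$ already stabilises by some step $n_1 \le k$, after which every new generator $F_{n+1}^{(1)}$ is algebraic over $\bbF_n$. Setting $\bbF_\infty := \bigcup_{n \ge 0} \bbF_n$, the argument from \cref{thm:noether} shows that $\bbF_\infty/\bbF_{n_1}$ is a finite algebraic extension. It remains (a) to rule out a ``plateau'' in the chain followed by further growth, so that the stabilisation index is bounded by $n_1$ plus the number of strict algebraic steps, and (b) to bound $[\bbF_\infty : \bbF_{n_1}]$ quantitatively.

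For (a), I would reuse the substitution-lemma trick from the existence part of \cref{thm:simple:ratrec}. If $\bbF_m = \bbF_{m+1}$, then $F_{m+1}^{(1)} = R(F_0^{(1)},\ldots,F_m^{(1)})$ for some $R = A/B \in \bbQ(y_0,\ldots,y_m)$, so the polynomial $Z(y_0,\ldots,y_{m+1}) := y_{m+1} B - A$ vanishes at $(F_0^{(1)},\ldots,F_{m+1}^{(1)})$. By \cref{lem:substitution} it vanishes at every shift, giving $F_{n+m+1}^{(1)} = R(F_n^{(1)},\ldots,F_{n+m}^{(1)}) \in \bbF_{n+m}$ for all $n \ge 0$. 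Hence once the chain flattens it stays flat, so if $N$ denotes the stabilisation index, each of the $N$ preceding inclusions is strict.

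For (b), I would combine \cref{lem:Nagata} with \cref{lem:Muller-Quade:Steinwandt}. Let $r := \trdeg \bbF_\infty \le k$. Nagata's lemma produces algebraically independent $f_1, \ldots, f_r \in \bbQ(\bx)$ with $\bbF_\infty \subseteq \bbQ(f_1, \ldots, f_r)$; since $\trdeg \bbF_{n_1} = r$, the extension $\bbQ(f_1, \ldots, f_r)/\bbF_{n_1}$ has transcendence degree $0$ by \cref{lem:degree:additive}, hence each $f_i$ is algebraic over $\bbF_{n_1}$. By \cref{lem:Muller-Quade:Steinwandt} its minimal polynomial has degree at most $M := \prod_{j=0}^{n_1} \deg F_j^{(1)}$, and multiplicativity of field degree (\cref{lem:degree:multiplicative}) yields $[\bbF_\infty : \bbF_{n_1}] \le [\bbQ(f_1, \ldots, f_r) : \bbF_{n_1}] \le M^r$. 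Inserting $\deg F_j^{(1)} \le (kD)^j$ from \cref{lem:ratrec:degree:growth} together with $n_1, r \le k$ bounds this by $(kD)^{r \cdot n_1(n_1+1)/2} \le (kD)^{k^3}$.

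Putting the pieces together, each strict algebraic step past $n_1$ at least doubles the $\bbF_{n_1}$-dimension, so there are at most $\log_2 [\bbF_\infty : \bbF_{n_1}] \le k^3 \log(kD)$ of them, and the total stabilisation index is at most $k + k^3 \log(kD)$. The most delicate point is (a): the evaluation $R(F_n^{(1)}, \ldots, F_{n+m}^{(1)})$ is a~priori only formal, so one has to verify that the denominator $B(F_n^{(1)}, \ldots, F_{n+m}^{(1)})$ is nonzero in $\bbQ(\bx)$ for every $n$, exactly as in the analogous subtlety in the existence part of \cref{thm:simple:ratrec}.
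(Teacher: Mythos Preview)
Your proposal is correct and follows essentially the same route as the paper: first bound the index $n_1\le k$ at which the transcendence degree stabilises via \cref{lem:trdeg}; then sandwich $\bbF_\infty$ inside a purely transcendental extension $\bbQ(f_1,\ldots,f_r)$ via \cref{lem:Nagata}, bound each $[\,\bbQ(f_1,\ldots,f_r):\bbF_{n_1}\,]$-contributing generator by \cref{lem:Muller-Quade:Steinwandt}, and use multiplicativity of degree together with \cref{lem:ratrec:degree:growth} to cap $[\bbF_\infty:\bbF_{n_1}]\le(kD)^{k^3}$; finally, use the substitution lemma to show that the first equality $\bbF_m=\bbF_{m+1}$ forces stabilisation, so at most $\log_2[\bbF_\infty:\bbF_{n_1}]$ strict algebraic steps can occur. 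The only cosmetic difference is that the paper writes the witnessing rational function $R$ with coefficients in $\bbF_{n_1}$ rather than in $\bbQ$, which amounts to the same thing after clearing denominators; your remark about the denominator $B(F_n^{(1)},\ldots,F_{n+m}^{(1)})$ possibly vanishing is apt, and the paper treats this point with the same level of (in)formality as you do.
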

\begin{proof}
    By \cref{lem:trdeg}, there exists $j_1\leq k$ such that~$$r \coloneqq \trdeg \bbF_{j_1} = \trdeg \bbF_{j} \quad \textrm{for all }j \geq j_1.$$
    In particular, all field extensions $\bbF_j$ for $j \geq j_1$ are algebraic over $\bbF_{j_1}$.
    %
    %
    As in the proof of \cref{thm:noether}, consider the field extension $\bbF_\infty =  \bigcup\limits_{j=0}^{\infty} \bbF_j$ over $\bbF_{j_1}$, which is algebraic. In particular, $\trdeg \bbF_\infty = r$.
    %
    By \cref{lem:Nagata}, there are rational functions $f_1, \dots, f_r \in \bbQ(x_1, \dots, x_k)$ s.t.~$\bbF_\infty \subseteq \bbQ(f_1, \dots, f_r)$.
    Since $\bbQ(f_1, \dots, f_r)$ has the same transcendence degree $\trdeg \bbQ(f_1, \dots, f_r) = r$ as $\bbF_{j_1}$,
    it follows that the $f_i$'s are algebraic over $\bbF_{j_1}$.
    By \cref{lem:Muller-Quade:Steinwandt}, each $f_i$ is algebraic of degree at most $\deg F_0 \cdots \deg F_{j_1}$ over $\bbF_{j_1}$.
    It follows that $\bbQ(f_1, \dots, f_r)$ is an algebraic extension of degree at most $d = (\deg F_0 \cdots \deg F_{j_1})^r$ over $\bbF_{j_1}$.
    Thus the chain
    \begin{align*}
        \bbF_{j_1} \subseteq \bbF_{j_1+1} \subseteq \bbF_{j_1+2} \subseteq \cdots
    \end{align*}
    is such that the degree of any $\bbF_{j_1+t}$ over $\bbF_{j_1}$ is at most $d$. In particular, all extensions in this chain are algebraic. We show that it stabilizes after at most~$d$ steps. Assume that for some $t \geq 0 $ we have $\bbF_{j_1 + t} = \bbF_{j_1 + t + 1}$, that is $F^{(1)}_{j_1 + t + 1}(\bx) \in \bbF_{j_1 + t}$. Thus, there is a rational function $R \in \bbF_{j_1}(y_1,\ldots,y_t)$ such that $F^{(1)}_{j_1 + t + 1}(\bx) = R(F^{(1)}_{j_1 + 1}(\bx),\ldots, F^{(1)}_{j_1 + t}(\bx))$. Then by applying \cref{lem:substitution} to the numerator of $R(y_1,\ldots,y_t) - y_{t+1}$, we may express $F^{(1)}_{j_1 + n + t + 1}(\bx)$ as a rational function of $F^{(1)}_{j_1 + n + 1}(\bx),\ldots, F^{(1)}_{j_1 + n + t}(\bx)$, i.e., elements of $\bbF_{j_1+n+t}$. Hence an equality in the field chain implies stabilization at this point. 
    
    Since the degree grows at each step before stabilization and the degree is multiplicative, the chain stabilizes after at most $\log d$ steps, at $\bbF_{j_0}$ for some $j_0\leq j_1+\log((\deg F_0 \cdots \deg F_{j_1})^r)$.
    By \cref{lem:ratrec:degree:growth} and since $j_1\leq k$ and $r\leq k$, we have, as required,
    \begin{align*}
        j_0
            & \leq j_1+r \log(\deg F_0 \cdots \deg F_{j_1}) \\
            & \leq k+ k\log((kD)^0\cdots (kD)^k) \\
            & \leq k + k^3\log (kD). \qedhere
    \end{align*}
\end{proof}

We are ready to provide the proof of the quantitative bound promised in \cref{thm:simple:ratrec}.

\begin{proof}[\protect{Proof (of the second part of \cref{thm:simple:ratrec})}]
    It suffices to observe that $m$ in the proof the first part of \cref{thm:simple:ratrec}
    can be bounded by $k + k^3\log (kD)$ thanks to \cref{lem:stabilization}.    
\end{proof}


%

We finish this section by giving an example that shows that in the proof of \cref{lem:stabilization}, it may happen that $j_1<j_0$, that is, after the stabilisation of the transcendence degree, there can be several non-trivial algebraic extensions until the fields themselves stabilise. Consider the poly-rec system
    \begin{align*}
        \left\{\begin{array}{lcl} 
            u^{(1)}_{n+1}   &=& (u_n^{(1)})^2 + (u_n^{(2)})^2, \\
            u^{(2)}_{n+1}   &=& u_n^{(1)} + u_n^{(2)}.
        \end{array}\right.
    \end{align*}
We have $F^{(1)}_0 = x_1, F^{(2)}_0 = x_2$, then $F^{(1)}_1 = x_1^2 + x_2^2, F^{(2)}_1 = x_1+x_2$ and $F^{(1)}_2 = (x_1^2+x_2^2)^2 + (x_1+x_2)^2$. The chain~\eqref{eq:chain} starts with
\[\bbQ \subseteq \bbF_0 =\bbQ(x_1) \subseteq \bbF_1 = \bbF_0(x_1^2+x_2^2) = \bbQ(x_1,x_2^2) \subseteq \bbF_2.\]
Note that $\trdeg \bbF_0 = 1$ and $\trdeg  \bbF_1= 2$, which is the maximum value. However, the next extension $\bbF_1 \subseteq \bbF_2 = \bbF_1(F^{(1)}_2) = \bbF_1(x_1x_2)$ is non-trivial, because $x_1x_2$ does not belong to $\bbQ(x_1,x_2^2)$. In fact, it is algebraic of degree~2.
%

\subsection{Obstacles towards the zeroness problem for polyrec sequences}
\label{subsec:obstacles}

\cref{thm:simple:ratrec} suggests the following algorithm for deciding zeroness of a polyrec sequence $\bu$. Suppose the dimension of $\bu$ is $k$ and the degree is $D$. We compute the first $p+2$ entries of~$\bu$, where $p=k+ k^3\lceil\log (kD)\rceil$, and we verify whether all of them are zero. Obviously, if one of them is non-zero, then $\bu$ is non-zero. Otherwise, by \cref{thm:simple:ratrec}, we expect that there is a rational function $R(y_0,\ldots,y_m)$ for some $m\leq p$ such that
\begin{equation}\label{eq:incorrect}
        u_{n+m+1} = R(u_n,u_{n+1},\ldots,u_{n+m})\qquad\textrm{for all }n\in \bbN.
    \end{equation}
In particular,
\begin{equation*}
0=u_{m+1}=R(u_0,\ldots,u_m)=R(0,\ldots,0). 
\end{equation*}
Consequently,
\begin{equation*}
u_{m+2}=R(u_1,\ldots,u_{m+1})=R(0,\ldots,0)=0,
\end{equation*}
and a straightforward induction shows that $u_n=0$ for all $n\in \bbN$. So we can declare that $\bu$ is the zero sequence.

The reasoning above is incorrect for the following reason.  By \cref{thm:simple:ratrec}, there is a rational function $R\in \bbQ(y_0,\ldots,y_m)$ such that~\eqref{eq:incorrect} holds when both sides are treated symbolically, as rational functions over a set of $k$ variables $\bx$ that denote the vector of initial entries of the polyrec system defining $\bu$. However, $R$ is a rational function, hence when the variables are substituted with actual entries of the sequence $\bu$, we may get an accidental $0$ in the denominator of the right hand side. In other words, assertion~\eqref{eq:incorrect} may be incorrect due to the right hand side being ill-defined, which renders the remainder of the reasoning flawed. To exemplify the problem we now present an example where this situation actually occurs.

Fix some $d\in \bbN$, and let
\[P(x)=x(x-1)\dots (x-d+1).\]
Define the sequence $\bu$ by setting
\[u_n=P(n)\qquad\text{for all }n\in \bbN.\]
It is straightforward to see that $\bu$ is polyrec of dimension $2$ and degree $d$: one can simply use one auxiliary sequence $\bv$ with $v_n=n$.

Observe that if instead of setting $v_0=0$, we set $v_0=x$ for a formal variable $x$, the same polyrec system defines a sequence of polynomials $\widehat{\bu}$ over $x$ defined as
\[\widehat{u}_n=P(x+n)\qquad\text{for all }n\in \bbN.\]
(Here, we also set initial condition $\widehat{u}_0=P(x)$.)
Now, we may apply the reasoning behind \cref{thm:simple:ratrec} to find the rational function $R(y_0,y_1)\in \bbQ(y_0,y_1)$, defined as
\[R(y_0,y_1)=y_1\cdot \frac{(d+1)\cdot y_1-y_0}{y_1+(d-1)\cdot y_0},\]
such that
\[\widehat{u}_{n+2}=R(\widehat{u}_{n},\widehat{u}_{n+1})\qquad\textrm{for all }n\in \bbN.\]
This, however, should be regarded as an equality of two rational functions over the variable~$x$, which means that we cannot infer that
\[u_{n+2}=R(u_n,u_{n+1})\qquad\textrm{for all }n\in \bbN,\]
because the right hand side can be undefined for specific values; and indeed, $R(0,0)$ is undefined. The flawed reasoning from the beginning of this section would suggest that in order to verify the zeroness of $\bu$, it suffices to check that the first three entries of $\bu$ are zero. However, we have $u_0=u_1=\dots=u_{d-1}=0$ and $u_d=d!\neq 0$, so the algorithm would provide an incorrect answer.

Notice that if we had a promise that we never encounter a division by zero
when recursively applying \eqref{eq:incorrect} from the given initial conditions,
then the na\"ive zeroness algorithm presented at the beginning of the section would be sound.
(The na\"ive algorithm is complete even without the promise.)
However, deciding whether no division by zero occurs is essentially the Skolem problem for polyrec sequences,
which, as mentioned in the introduction, is a long-standing open problem.

We are hopeful that the problem with accidentally hitting a singularity of $R$ when starting from a polyrec sequence,
as present in the example above,
can somehow be circumvented, hence we state the following conjecture.

\begin{conjecture}\label{conj:elementary}
 There is an elementary function $g\colon \bbN\to \bbN$ such that the following holds. Suppose $\bu$ is a polyrec sequence of dimension at most $N$ and degree at most $N$ such that $u_n=0$ for all $n\leq g(N)$. Then $u_n=0$ for all $n\in \bbN$. 
\end{conjecture}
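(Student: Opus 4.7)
The plan is to recast the conjecture as a statement about the effective stabilisation of an ascending chain of polynomial ideals in a fixed number of variables. Given a polyrec system $(\bu^{(1)},\ldots,\bu^{(k)})$ of dimension $k\le N$ and degree $D\le N$ with initial conditions $a_i\in\bbQ$, I~would first replace the initial conditions by formal variables $x_i$ to obtain a \emph{universal} polyrec sequence $F_n^{(1)}\in \bbQ[\bx]$ satisfying $u_n=F_n^{(1)}(a_1,\ldots,a_k)$. Consider then the ascending ideal chain
\[
I_0\subseteq I_1\subseteq \ldots \subseteq \bbQ[\bx], \qquad I_n=(F_0^{(1)},F_1^{(1)},\ldots,F_n^{(1)}).
\]
By Hilbert's basis theorem this chain stabilises, say at index $g^\star$. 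If one could bound $g^\star$ by an elementary function of $N$, then setting $g(N)=g^\star$ would settle the conjecture: the hypothesis $u_0=\ldots=u_{g(N)}=0$ says exactly that $(a_1,\ldots,a_k)\in V(I_{g(N)})=V(I_\infty)$, whence every subsequent $F_n^{(1)}$ belongs to $I_\infty$ and thus vanishes at the initial point.

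The tool for obtaining such an elementary bound would be \cref{thm:simple:ratrec}, which at index $m\le k+k^3\log(kD)$ yields a polynomial identity
\[
F_{m+1}^{(1)}\cdot B(F_0^{(1)},\ldots,F_m^{(1)}) = A(F_0^{(1)},\ldots,F_m^{(1)}) \qquad \textrm{in } \bbQ[\bx],
\]
where $R=A/B$ is the rational witness. The naive hope is that $B(F_0^{(1)},\ldots,F_m^{(1)})$ is a non-zero-divisor modulo $I_m$, in which case $F_{m+1}^{(1)}\in (I_m\colon B)=I_m$ and the ideal chain stabilises at $m$. Unhappily, this is exactly the obstruction pointed out in \cref{subsec:obstacles}: the ``denominator'' $B$ may vanish on a whole component of $V(I_m)$, and the polyrec orbit may lie precisely in this singular locus of the symbolic recurrence, so the rational identity carries no direct ideal-theoretic content at the point of interest.

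The main obstacle is therefore to bridge from the elementary \emph{field-theoretic} stabilisation of \cref{thm:simple:ratrec} to an elementary \emph{ideal-theoretic} stabilisation. I~see two lines of attack. The first is geometric: lift the polyrec dynamics to a blow-up of $\bbA^k$ along the indeterminacy locus of $R$, where $R$ becomes a regular morphism, and then argue that the lifted orbit is trapped on the strict transform of $V(I_m)$; descending back to $\bbA^k$ and tracking the multiplicity of the exceptional divisor should yield an ideal-theoretic cancelling relation whose arity is controlled by an elementary bound on the ``depth'' of the singularity. The second is algebraic: apply \cref{thm:simple:ratrec} recursively not only to $\bF^{(1)}$ but also to the auxiliary polyrec sequences $B(F_n^{(1)},\ldots,F_{n+m}^{(1)})$ and to any further denominators that arise when simplifying the induced fraction $A/B$ modulo $I_m$, extracting after an elementary number of rounds a genuinely polynomial cancelling relation of elementary degree. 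In either route, the hard part is a \emph{quantitative} control of how many blow-ups (respectively, how many rounds of denominator-elimination) are needed before polynomial stabilisation is witnessed; giving such a bound that depends only on $N$, and not on the Ackermannian a priori behaviour of the underlying Noetherian chain, is the novel technical contribution that the full proof of the conjecture would require.
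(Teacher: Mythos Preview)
The statement you are attempting is \cref{conj:elementary}, which is explicitly an \emph{open conjecture} in the paper; there is no proof in the paper to compare your proposal against. Your submission is, correspondingly, not a proof but a research outline, and you acknowledge as much in your last paragraph (``the novel technical contribution that the full proof of the conjecture would require'').

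Your reduction to stabilisation of the ideal chain $I_n=(F_0^{(1)},\ldots,F_n^{(1)})\subseteq\bbQ[\bx]$ is sound and is essentially the mechanism behind the known Ackermannian upper bound for polynomial automata: once $I_m=I_{m+1}$, applying the ring endomorphism $x_i\mapsto P_i(\bx)$ shows the chain remains constant thereafter, and any point $(a_1,\ldots,a_k)\in V(I_m)$ annihilates every $F_n^{(1)}$. You also correctly isolate the obstacle to leveraging \cref{thm:simple:ratrec}: the rational identity only yields $F_{m+1}^{(1)}\cdot B(F_0^{(1)},\ldots,F_m^{(1)})=A(F_0^{(1)},\ldots,F_m^{(1)})$, and when $B(0,\ldots,0)=0$ this says nothing about membership of $F_{m+1}^{(1)}$ in $I_m$. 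This is exactly the phenomenon the paper exhibits in \cref{subsec:obstacles} with the example $u_n=n(n-1)\cdots(n-d+1)$. Your two proposed attacks---iterated blow-up along the indeterminacy locus of $R$, and recursive denominator elimination---are plausible heuristics, but neither is carried out, and neither comes with any argument that the number of rounds is bounded elementarily in $N$ rather than Ackermannianly. In short, your proposal accurately diagnoses where the difficulty lies but does not resolve it; this is consistent with the conjecture being open.
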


Note that a positive resolution to \cref{conj:elementary} would immediately imply that the complexity of the zeroness problem for polyrec sequences is elementary.


\section{Conclusion}
\label{sec:conclusions}
We believe that ratrec is a natural class of sequences with various promising questions deserving further investigation. Questions about decision problems are more natural for polyrec sequences due to their connection to polynomial automata and the issues with division by $0$ in ratrec discussed in the introduction. Nevertheless, as discussed in this paper, understanding the properties of ratrec might lead to concrete complexity results for polyrec. The most natural problem for future work is to overcome the obstacles discussed in \cref{subsec:obstacles}.

\bibliography{references}

\appendix

\section{Zeroness for polyrec is PSPACE-hard}
\label{sec:lower}
In order to speak about computational aspects of poly-rec sequences, we need to fix how they are encoded on input. For robustness, we choose to use arithmetic circuits. Formally, for a fixed field $\bbF$, a polynomial $P\in \bbF[x_1,\ldots,x_k]$ is encoded by a circuit $C$ that may use the following gates:
\begin{itemize}
 \item binary addition and multiplication gates;
 \item nullary input gates, bijectively labelled with variables $x_1,\ldots,x_k$; and
\item nullary constant gates, each labelled with an element of~$\bbF$.
\end{itemize}
Note that subtraction can be emulated using addition and multiplication by the constant $-1$.
One of the gates is designated as the output gate. 
Given a valuation of variables with elements of $\bbF$, the values of the gates can be computed as expected, and the value yielded by the circuit $C$ is the one computed for the output gate.

In this section we prove the following lower bound.

\begin{theorem}\label{theorem:pspace}
        For every fixed field $\bbF$, the zeroness problem for polyrec sequences over $\bbF$ is \PSPACE-hard.
\end{theorem}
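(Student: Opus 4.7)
The plan is to prove \PSPACE-hardness via a polynomial-time many-one reduction from the validity problem for quantified Boolean formulas (\QBF), which is \PSPACE-complete. Given a \QBF instance $\Phi = Q_1 x_1 \cdots Q_t x_t\, \phi(x_1,\ldots,x_t)$ with $Q_i \in \{\forall, \exists\}$ and $\phi$ a Boolean formula, the goal is to construct in polynomial time a polyrec sequence $\bu$ over $\bbF$, presented as an arithmetic circuit of polynomial size, such that $\Phi$ is valid if and only if $\bu$ is identically zero.

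The core idea is to simulate, step by step, the standard recursive depth-first algorithm for evaluating a \QBF using the mutual recursion of a polyrec system. Since this algorithm runs in polynomial space, its entire state at any moment fits in $O(t)$ Boolean-valued sequences: $t$ sequences for the current variable assignment, $O(t)$ sequences for the current depth and phase of the DFS traversal (descending, just-evaluated-$0$-branch, just-evaluated-$1$-branch, backtracking), and $t$ aggregator sequences storing the partial $\land$- or $\lor$-results at each level of the quantifier tree. All needed Boolean logic is polynomial over any field $\bbF$: for values in $\{0,1\} \subseteq \bbF$ one has $\mathrm{AND}(a,b)=ab$, $\mathrm{OR}(a,b)=a+b-ab$, $\mathrm{NOT}(a)=1-a$, and an if-then-else is $c \cdot A + (1-c) \cdot B$. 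The Boolean formula $\phi$ translates directly into a polynomial of the assignment sequences. Each transition of the DFS — moving to the next leaf, combining a child into its parent aggregator according to the parent's quantifier, backtracking, or declaring completion — is then a polynomial update on the state, selected by conditional logic on the current phase and depth. Altogether this yields a polyrec system of dimension $O(t)$ and polynomial degree.

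Finally, the output sequence is arranged to be zero throughout the DFS and, after completion, to equal $1 - \text{topAggregator}$. Concretely, using a flag $\text{done}_m \in \{0,1\}$ that becomes $1$ the first time the DFS returns from the root and stays $1$ afterwards via a self-loop that freezes the remaining sequences, one takes $u_m = \text{done}_m \cdot (1 - \text{topAggregator}_m)$. This is polynomial in the state and can be incorporated as one additional sequence, and by construction $\bu$ is identically zero if and only if $\Phi$ evaluates to true. The main obstacle in the formal proof is the bookkeeping required to verify that the polynomial transitions correctly implement the stack-like behaviour of the DFS — in particular, that the ``pop and combine'' step fires exactly when both children of an internal node have been evaluated — and that the resulting polyrec system has polynomial dimension and degree. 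These verifications are routine given the polynomial expressibility of conditional logic described above, and they translate immediately into the circuit presentation required by the input format. The same construction also yields \PSPACE-hardness of the Skolem problem as advertised, since the output $u_m$ is nonzero for some $m$ precisely when $\Phi$ is invalid.
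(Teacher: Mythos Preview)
Your approach is correct and, like the paper, reduces from \QBF, but the encodings are organised differently. The paper does not simulate the DFS evaluation algorithm. Instead it builds binary-counter sequences $\bc^1,\ldots,\bc^k$ that enumerate all $2^k$ valuations in order, evaluates the matrix $\varphi$ at each to get a sequence $\bd^0$, and then aggregates level by level: $\bd^i$ combines pairs of values of $\bd^{i-1}$ that lie $2^{i-1}$ steps apart, using $\land$ or $\lor$ according to the $i$th quantifier. The exponential look-back is realised by auxiliary ``memory'' sequences $\bff^{i-1}$ that copy the relevant value of $\bd^{i-1}$ forward until it is needed. The output $\bd^k$ is identically zero iff the \QBF is \emph{false}, so the paper reduces from invalidity rather than validity (immaterial, since $\PSPACE$ is closed under complement). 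Your DFS simulation is arguably more routine to verify, mirroring a textbook algorithm and avoiding the look-back trick; the paper's counter-based construction is slicker in that it needs no ``done'' flag or freezing logic and the output is naturally periodic. Both ultimately exploit the same phenomenon: a $\{0,1\}$-valued polynomial-space computation has a polynomial-size polyrec description because Boolean gates and if-then-else are polynomials over any field.

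Two small points. First, what the reduction needs is polynomial \emph{circuit size}, not polynomial degree: if $\varphi$ is an arbitrary Boolean circuit then $P_\varphi$ may have exponential degree, though its arithmetic circuit is still polynomial, and that is all the input format requires. Second, your closing remark on Skolem is not quite right: ``$u_m$ nonzero for some $m$'' is the negation of zeroness, not the Skolem property (existence of a zero). As in the paper, one passes to the sequence $1-u_m$ to obtain Skolem-hardness.
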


The lower bound claimed in the introduction follows from the theorem above by taking $\bbF = \bbQ$.
Note also that together with \cref{thm:finite-field} below, we can conclude that the problem is actually \PSPACE-complete for every fixed {\em{finite}} field $\bbF$.

\begin{theorem}\label{thm:finite-field}
 For every fixed finite field $\bbF$, the zeroness problem for polyrec sequences over $\bbF$ is in \PSPACE.
\end{theorem}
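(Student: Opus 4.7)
The plan is to exploit the fact that over a fixed finite field~$\bbF$, the state space of a polyrec system of dimension~$k$ is the finite set $\bbF^k$. Since $|\bbF|$ is a constant, a state $(u^{(1)}_n,\ldots,u^{(k)}_n)$ can be stored in $O(k)$ bits, even though the number of distinct states, $|\bbF|^k$, is exponential in the input size. Because the update rule~\eqref{def:prs} is deterministic, the pigeonhole principle immediately yields that the sequence of state vectors is eventually periodic with pre-period plus period of length at most $|\bbF|^k$.

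Building on this, I would describe the following \PSPACE algorithm. Initialise the state to $(u^{(1)}_0,\ldots,u^{(k)}_0)$ and a binary counter to zero. Then repeat at most $|\bbF|^k+1$ times: first, check whether the first coordinate of the current state is $0$, rejecting (the sequence is non-zero) as soon as this fails; otherwise, evaluate the arithmetic circuits representing $P_1,\ldots,P_k$ at the current state, overwrite the state with the resulting tuple, and increment the counter. If the loop completes without ever observing a non-zero first coordinate, accept. Correctness in the accepting case follows from the pigeonhole bound: within $|\bbF|^k+1$ iterations some state must recur, after which the tail of the sequence is confined to already-visited states, none of which produced a non-zero first coordinate.

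The space analysis is routine: the current state and the counter each need $O(k)$ bits, and evaluating an arithmetic circuit at a given valuation over a fixed finite field is a standard polynomial-time procedure, since every intermediate value has bit-size $O(\log|\bbF|)=O(1)$; hence its working space is polynomial in the circuit size and thus in the input length. I do not anticipate any serious obstacle here, as the argument is essentially an enumeration of an exponentially-sized state space with succinctly represented states. The only subtlety worth flagging is that the iteration counter must be maintained in binary rather than in unary, so that its contribution to the working memory is $O(k)$ bits rather than $|\bbF|^k$ bits.
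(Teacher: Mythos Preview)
Your proposal is correct and matches the paper's own argument essentially verbatim: both use the pigeonhole bound $|\bbF|^k$ on the state space to limit the number of iterations, store only the current $k$-tuple and a binary counter, and evaluate the update circuits at each step in polynomial space. The only difference is cosmetic---you spell out the counter and circuit-evaluation details, while the paper leaves them implicit.
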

\begin{proof}
    Let $m$ be the cardinality of $\bbF$; note that $m$ is a fixed constant.
    A standard periodicity argument, \eg as in the proof of~\cite[Theorem 4.1]{cadilhac2021polynomial}, shows that if $\bu$ is a polyrec sequence of dimension $k$,
    then it is zero if, and only if, it is zero for the first $m^k$ steps.
    We can check the latter condition by storing in memory a $k$-tuple of values and computing the first $m^k$ values of the sequence,
    which takes an amount of space which is polynomial in $k$.
\end{proof}

\subsection{Extended polyrec sequences}

In the reductions leading to the lower bound of \cref{theorem:pspace} it is convenient to construct polyrec sequences according to a definition slightly more general than what we allowed in \cref{def:polyrec}.
Namely, the definition of an \emph{extended polyrec system} is the same as before,
except that we generalize the format of the $i$th equation $u^{(i)}_{n+1} = P_i(\cdots)$ by allowing $u^{(i)}_{n+1}$ to additionally depend on $u^{(1)}_{n+1}, \dots, u^{(i-1)}_{n+1}$. Thus, the $i$th equation takes the form:
        \begin{equation}\label{eq:ext-polyrec}
                u^{(i)}_{n+1} = P_i(u^{(1)}_n, \dots, u^{(k)}_n, u^{(1)}_{n+1}, \dots, u^{(i-1)}_{n+1}),
        \end{equation}
where now $P_i$ is a polynomial in $k+i-1$ variables. This more relaxed definition will help focus on the important aspects of the reduction presented in the rest of this section. The following lemma shows that the modification does not affect the complexity of the zeroness problem.

\begin{lemma}\label{lem:ext-elimination}
 Suppose $\bu$ is a sequence defined by an extended polyrec system $S$ of dimension $k$, where each polynomial $P_i$ is represented by circuit $C_i$. Then given the circuits $C_i$, one can in polynomial time construct a circuit $C$ that represents a polyrec system $S'$ of dimension $k$ that also defines $\bu$ (with the same initial condition as $S$).
\end{lemma}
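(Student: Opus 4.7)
The plan is to convert the extended polyrec system $S$ into a standard polyrec system $S'$ by systematically eliminating the dependencies on the same-step variables $u^{(1)}_{n+1}, \dots, u^{(i-1)}_{n+1}$ appearing on the right-hand side of equation~\eqref{eq:ext-polyrec}. Since each such variable is itself produced by a preceding equation of $S$, we can recursively substitute it until each update rule depends only on $u^{(1)}_n, \dots, u^{(k)}_n$.

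Concretely, I would proceed by induction on $i$ to build polynomials $P'_1, \dots, P'_k \in \bbF[x_1, \dots, x_k]$, represented by circuits $C'_1, \dots, C'_k$, satisfying
\begin{equation*}
u^{(i)}_{n+1} = P'_i(u^{(1)}_n, \dots, u^{(k)}_n) \qquad \text{for every } n \in \bbN \text{ and every } i \in \{1, \dots, k\}.
\end{equation*}
For $i = 1$, the polynomial $P_1$ already depends only on $u^{(1)}_n, \dots, u^{(k)}_n$, so we take $C'_1 := C_1$. For the inductive step, assume that $C'_1, \dots, C'_{i-1}$ have already been constructed. We then build $C'_i$ from $C_i$ by redirecting, for each $j < i$, every input gate of $C_i$ that is labelled with the formal variable corresponding to $u^{(j)}_{n+1}$ so that it reads from the output gate of $C'_j$; the remaining inputs $x_1, \dots, x_k$ of $C_i$ now uniformly stand for $u^{(1)}_n, \dots, u^{(k)}_n$. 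Correctness at step $i$ follows directly by applying the inductive hypothesis to rewrite $P_i(u^{(1)}_n, \dots, u^{(k)}_n, u^{(1)}_{n+1}, \dots, u^{(i-1)}_{n+1})$ as a polynomial in $u^{(1)}_n, \dots, u^{(k)}_n$ alone. The output polyrec system $S'$ is then defined by $(P'_1, \dots, P'_k)$ with the same initial condition as $S$, and a straightforward induction on $n$ shows that $S'$ generates the same sequence $\bu$.

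For the complexity bound, I would bundle the circuits $C'_1, \dots, C'_k$ into a single DAG in which the subcircuits $C'_j$ are shared across the later $C'_i$. Under this sharing, each input circuit $C_i$ contributes at most $|C_i|$ fresh gates, since the substitution step only rewires the input gates of $C_i$ rather than duplicating any subcircuit. Hence the total size of the combined construction is at most $\sum_{i=1}^k |C_i| + \Oh(k)$, and all rewirings can be performed in polynomial time by a single pass over the input circuits.

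The main subtlety, which I would emphasize, is that polynomial-time feasibility crucially depends on the circuit (DAG with sharing) representation. If the $P_i$ were represented as lists of monomials or as tree-shaped formulas, a naive substitution could square the number of monomials (or the formula size) at each inductive step, producing a system of exponential size and degree. The circuit representation sidesteps this entirely: substitution reduces to rewiring a handful of input gates, and shared subcircuits absorb what would otherwise be repeated computation.
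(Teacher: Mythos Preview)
Your proposal is correct and follows essentially the same approach as the paper: build one combined DAG in which each input gate of $C_i$ corresponding to a same-step variable $u^{(j)}_{n+1}$ is rewired to the output of the (already processed) circuit for index $j$, with sharing ensuring linear size. The paper phrases this as a single fusing step over the disjoint union of $C_1,\ldots,C_k$ rather than an induction on $i$, but the resulting circuit and the complexity argument are identical.
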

\begin{proof}
 Let the input gates of circuit $C_i$ be labelled with $x_1,\ldots,x_k,z_1,\ldots,z_{i-1}$, where variables $z_1,\ldots,z_{i-1}$ respectively correspond to the values $u^{(1)}_{n+1}, \dots, u^{(i-1)}_{n+1}$ in~\eqref{eq:ext-polyrec}. Construct the circuit $C$ from the union of circuits $C_1,\ldots,C_k$ by performing the following operations for each $i\in \{1,\ldots,k\}$:
 \begin{itemize}
  \item Fuse all input gates labelled $x_i$ in circuits $C_1,\ldots,C_k$ into a single input gate labelled $x_i$.
  \item Fuse the output gate of $C_i$ with all input gates labelled $z_i$ in circuits $C_{i+1},\ldots,C_k$.  
 \end{itemize}
 The output gates of $C$ are the output gates of $C_1,\ldots,C_k$. (Formally, we assumed that output gates must have fan-out $0$, but this can be easily obtained by making a copy of each output gate.) It is straightforward to verify that the polyrec system $S'$ that $C$ represents defines the same $k$-tuple of sequences as $S$ under the same initial condition.
\end{proof}

\subsection{Reduction}

We now proceed to the proof of \cref{theorem:pspace}. Let us fix the field $\bbF$; in the reduction we will use only two constants from $\bbF$, namely $0$ and $1$.
%
%
%
%
We reduce from the validity problem for Quantified Boolean Formulas (\QBF),
which is known to be \PSPACE-complete (see, e.g., \cite[Theorem 19.1]{Papadimitriou:CC:1994}).
Recall that the \QBF validity problem amounts to determine whether a given \QBF of the form
\begin{align}\label{eq:qbf}
        \psi = \exists x_{1} \ \forall x_2\ \ldots\ Q_k x_k \ \varphi(x_1,\ldots,x_k)
\end{align}
is true, where $\varphi(x_1,\ldots,x_k)$ is quantifier-free,
the variables with odd indices are quantified existentially,
the remaining variables are quantified universally,
and $Q_k$ is either $\exists$ or $\forall$ depending on the parity of $k$.
Hence, we are given a \QBF $\psi$ and we wish to construct, in polynomial time, a polyrec system $S$ and its initial condition that define a sequence $\bu$ over $\bbF$ such that the zeroness of $\bu$ is equivalent to the invalidity of $\psi$. By \cref{lem:ext-elimination}, it suffices to construct an extended polyrec system $S$ with this property, where each polynomial $P_i$ involved is represented by a separate circuit.
In the following, the {\em{size}} of an extended polyrec system is the total size of its representation through circuits, which is constructed implicitly.

In the reduction it will be convenient to consider formulas obtained by fixing the truth values of a subset of the bound variables of $\psi$.
For every $i \in \set{0,\ldots,k}$ and $c_{i+1},\ldots,c_{k} \in \set{0,1}$ we define the formula
\begin{align*}
        \psi|_{c_{i+1},\ldots,c_{k}} = \exists x_1 \forall x_2 \cdots Q_{i} x_i\; \varphi(x_{1},\ldots,x_{i},c_{i+1},\ldots,c_{k}),
\end{align*}
where $Q_i$ is either $\exists$ or $\forall$ depending on the parity of $i$.
In particular, for $i = k$ we get back $\psi$,
and for $i=0$ the formula $\psi|_{c_1,\ldots,c_k}$ reduces to the truth value of $\varphi(c_1,\ldots,c_k)$.
We encode a quantifier Boolean formula $\varphi$ into a polynomial $P_\varphi$
using the following simulation of Boolean operators $\neg$, $\wedge$ and $\vee$ by arithmetic operations:
\begin{align}
        \label{eq:boolean}
        \begin{split}
                P_x &= x, \\
                P_{\neg \varphi} &= 1 - P_\varphi, \\
                P_{\varphi \land \psi} &= P_\varphi \cdot P_\psi, \\
                P_{\varphi \lor \psi} &=  P_{\neg (\neg \varphi \land \neg \psi)}.
        \end{split}
\end{align}
For example, $\tau(x,y) = (x\wedge y) \vee \neg y$ is encoded as $P_\tau(x,y) = 1 - (1-xy)y$.
The following straightforward claim shows that with the standard interpretation of $1$ and $0$ representing true, resp., false,
such polynomials evaluate as expected. Note that this claims holds in any fixed field.

\begin{claim}\label{claim:boolean_poly}
Let $\tau(x_1,\ldots,x_k)$ be a Boolean formula and $P_\tau(x_1,\ldots,x_k)$ its corresponding polynomial.
For every $c_1,\ldots,c_k \in \set{0,1}$ we have $P_\tau(c_1,\ldots,c_k) \in \set{0,1}$ and
$$
(c_1,\ldots,c_k) \models \tau \iff P_\tau(c_1,\ldots,c_k) = 1.
$$
\end{claim}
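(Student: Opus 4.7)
The plan is to proceed by straightforward structural induction on the Boolean formula $\tau$, following exactly the recursive definition of $P_\tau$ given in~\eqref{eq:boolean}. The inductive invariant is the statement of the claim: both that $P_\tau(c_1,\ldots,c_k)\in\{0,1\}$ for any Boolean assignment and that $P_\tau(c_1,\ldots,c_k)=1$ exactly when $(c_1,\ldots,c_k)\models \tau$. Since the encoding uses only $1$, addition, subtraction and multiplication, and since in any field $0\neq 1$, the argument works over an arbitrary fixed field $\bbF$ without modification.

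For the base case $\tau=x_i$, we have $P_\tau=x_i$, so $P_\tau(c_1,\ldots,c_k)=c_i\in\{0,1\}$, and by definition of satisfaction $(c_1,\ldots,c_k)\models x_i \iff c_i=1$. For the inductive step, I would treat each connective in turn. If $\tau=\neg\varphi$, then by induction $P_\varphi(c_1,\ldots,c_k)\in\{0,1\}$, hence $1-P_\varphi(c_1,\ldots,c_k)\in\{0,1\}$, and $(c_1,\ldots,c_k)\models \neg\varphi$ iff $(c_1,\ldots,c_k)\not\models\varphi$ iff $P_\varphi(c_1,\ldots,c_k)=0$ iff $P_{\neg\varphi}(c_1,\ldots,c_k)=1$. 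If $\tau=\varphi\land\psi$, then by induction both $P_\varphi(c_1,\ldots,c_k)$ and $P_\psi(c_1,\ldots,c_k)$ lie in $\{0,1\}$, so their product does too, and the product equals $1$ iff both factors equal $1$, matching the semantics of conjunction.

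The case $\tau=\varphi\lor\psi$ reduces directly to the previous two: by definition, $P_{\varphi\lor\psi}=P_{\neg(\neg\varphi\land\neg\psi)}$, so applying the already established negation and conjunction cases to the subformulas $\neg\varphi$, $\neg\psi$, and $\neg\varphi\land\neg\psi$ yields the result, using that $(c_1,\ldots,c_k)\models \varphi\lor\psi$ iff $(c_1,\ldots,c_k)\not\models\neg\varphi\land\neg\psi$.

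Strictly speaking this is a routine verification, so I do not expect any genuine obstacle. The only subtle point worth flagging is the need for $0\neq 1$ in the field (to ensure the two cases in $\{0,1\}$ are actually distinct so that the biconditional makes sense), which holds in any field by definition. In particular, no characteristic-dependent cancellation ever occurs, since the inductive invariant confines all intermediate evaluations to the subset $\{0,1\}\subseteq \bbF$, on which addition and multiplication coincide with the Boolean operations implemented by~\eqref{eq:boolean}.
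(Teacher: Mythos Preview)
Your proposal is correct and is precisely the routine structural induction the paper has in mind; in fact the paper does not spell out a proof at all, calling the claim ``straightforward'' and noting only that it holds over any fixed field. Your write-up simply makes this explicit, and the points you flag (values confined to $\{0,1\}$, $0\neq 1$ in any field) are exactly what is needed.
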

To ease the notation we will directly write formulas as polynomials; for instance,
$P_\tau(x,y) = (x\wedge y) \vee \neg y$.
All sequences in this section will be over $\set{0,1}$ and the involved polynomials will be of the form $P_\tau$.

\subparagraph{Sequences $\bc^1, \ldots, \bc^k$}
The truth valuations of variables $x_1, \dots, x_k$ will be encoded by sequences $\bc^1, \ldots, \bc^k$,
where for every $i$ and $n$ we have
\begin{align}\label{eq:ci}
\begin{split}
 c^i_n = \begin{cases}
          0 & \text{ if } n \bmod 2^{i} \text{ is less than } 2^{i-1},\\
          1 & \text{ otherwise.}
         \end{cases}
\end{split}
\end{align}
For example, the first eight values of $\bc^1,\bc^2,\bc^3$ are 

\vspace{0.2cm}
\begin{tabular}{ccccccccccc}
$\bc^1$ &=&0&1&0&1&0&1&0&1 \\
$\bc^2$ &=&0&0&1&1&0&0&1&1 \\
$\bc^3$ &=&0&0&0&0&1&1&1&1&.
\end{tabular}
\vspace{0.2cm}

\begin{claim}\label{claim:bc_size}
       For every $i \geq 1$,
       the sequence $\bc^i$ is definable by an extended polyrec system over $\bbF$ of size polynomial in $i$.
\end{claim}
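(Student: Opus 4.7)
The plan is to observe that $c^i_n$ is precisely the $i$-th least significant bit of $n$ written in binary; i.e., $c^i_n = \lfloor n/2^{i-1}\rfloor \bmod 2$. Consequently, the sequences $\bc^1,\ldots,\bc^i$ jointly implement an $i$-bit binary counter, and they satisfy the classical carry-propagation recurrence
\[c^1_{n+1} = 1 - c^1_n, \qquad c^j_{n+1} = c^j_n \oplus \bigl(c^1_n \wedge c^2_n \wedge \cdots \wedge c^{j-1}_n\bigr) \quad \text{for } 2 \leq j \leq i,\]
with initial values $c^j_0 = 0$ for every $j$. The Boolean connectives on the right-hand sides can be translated to polynomials over $\bbF$ with constants in $\{0,1\}$ via the rules in~\eqref{eq:boolean}, and \cref{claim:boolean_poly} guarantees that the resulting polynomials compute the intended Boolean values whenever their inputs lie in $\{0,1\}$.

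First I would verify the recurrence. This reduces to the standard observation that incrementing $n$ by $1$ in binary flips the $j$-th bit exactly when all of bits $1,\ldots,j-1$ equal $1$ (precisely the condition for a carry to propagate up to position $j$). A straightforward induction on $n$ then confirms that the sequences produced by this recurrence are precisely those prescribed by~\eqref{eq:ci}.

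The second step is to encode this recurrence as an extended polyrec system whose total circuit size is polynomial in $i$. A direct encoding, where $c^j_{n+1}$ is expressed as a polynomial in $c^1_n,\ldots,c^j_n$ involving a conjunction of length $j-1$, yields total size $O(i^2)$, which already suffices for the claim. For a cleaner $O(i)$ bound, I would introduce auxiliary sequences $\bd^1,\ldots,\bd^{i-1}$ maintaining the partial conjunctions $d^j_n = c^1_n \cdots c^j_n$, arrange all sequences in the interleaved order $c^1, d^1, c^2, d^2, \ldots, c^{i-1}, d^{i-1}, c^i$, and exploit the extended format~\eqref{eq:ext-polyrec} to reference earlier sequences at step $n+1$. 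The defining equations then take the constant-size form
\[c^1_{n+1} = 1 - c^1_n, \qquad d^1_{n+1} = c^1_{n+1}, \qquad c^j_{n+1} = P_{\oplus}(c^j_n, d^{j-1}_n), \qquad d^j_{n+1} = d^{j-1}_{n+1}\cdot c^j_{n+1},\]
where $P_{\oplus}(a,b)$ is the Boolean XOR translated according to~\eqref{eq:boolean}. All initial values are set to $0$, matching $n=0$.

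I do not anticipate a serious obstacle: the binary-counter recurrence is classical, and the extended polyrec formalism is well suited to expressing iterated products in constant circuit size per step. The only point requiring care is the ordering of the sequences so that every right-hand side uses only values at step $n$ or values of sequences defined earlier in the ordering at step $n+1$; the chosen interleaving makes this automatic.
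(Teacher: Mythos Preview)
Your proposal is correct. Both you and the paper treat $\bc^1,\ldots,\bc^i$ as an $i$-bit binary counter, but the mechanism for propagating the carry differs.

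The paper detects the carry into position $i$ by watching for a transition of $c^{i-1}$ from $1$ to $0$: it writes $c^i_n = Q(c^i_{n-1}, c^{i-1}_{n-1}, c^{i-1}_n)$, where $Q$ flips $c^i$ precisely when $c^{i-1}_{n-1}=1$ and $c^{i-1}_n=0$. This exploits the extended format to access both the old and the new value of $c^{i-1}$ in the same equation, requires no auxiliary sequences, and yields constant-size equations for a total size of $O(i)$.

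You instead compute the carry explicitly as the conjunction $c^1_n \wedge \cdots \wedge c^{j-1}_n$ of the old values. Your direct encoding already gives $O(i^2)$ total size, and your refinement with the partial-product sequences brings it down to $O(i)$. Both variants are entirely adequate for the claim. The paper's version is slightly more economical in that it avoids auxiliary sequences; yours is the textbook ripple-carry adder and arguably easier to verify from first principles. One cosmetic point: the paper later introduces an unrelated family of sequences also named $\bd^i$, so in a full write-up you would want to rename your auxiliaries.
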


\begin{claimproof}
        We proceed by induction on $i$.
        For $i = 1$, by definition we have $c^1_n = 1 - c^1_{n-1}$,
        and thus we let
        \begin{align}\label{eq:c0}
                c^1_n = P(c^1_{n-1}), \text{ with } P(x) = \neg x.
        \end{align}

        Now, suppose $i > 1$ and we have defined $\bc^{i-1}$.
        We start by proving the following equality for every $n > 1$
        \begin{align}\label{eq:ck_if}
        c^i_n = \begin{cases}
                1 - c^{i}_{n-1} & \text{ if } c^{i-1}_{n-1} = 1 \text{ and } c^{i-1}_n = 0; \\
                c^{i}_{n-1} & \text{ otherwise.}
                \end{cases}
        \end{align}
        Notice that $\bc^i$ is periodic with period $2^i$, i.e., $c^i_{n} = c^i_{n + 2^i}$ for all $n$. Thus it suffices to prove \eqref{eq:ck_if} for $n \in \set{1,\ldots,2^i}$.
        By definition, $c^{i-1}_{n-1} = 1$ and $c^{i-1}_n = 0$ hold precisely for two values of $n \in \set{1,\ldots,2^i}$,
        namely for $n = 2^{i-1}$ and $n = 2^i$.
        Thus \eqref{eq:ck_if} is proved since $c^{i}_n = 0$ for $0 \le n < 2^{i-1}$; $c^{i}_n = 1$ for $2^{i-1} \le n < 2^i$; and $c^{i}_{2^i} = 0$.

        Using \eqref{eq:ck_if} one can determine $c^i_n$ given $c^i_{n-1}$, $c^{i-1}_{n-1}$, and~$c^{i-1}_{n}$:
        \begin{align}
                \label{eq:ck_poly}
                c^i_n = Q(c^i_{n-1}, c^{i-1}_{n-1}, c^{i-1}_n),
        \end{align}
        where $Q(x,y,z) = \left(\neg x \wedge (y \wedge \neg z)\right) \vee \left( x \wedge (\neg y \vee z) \right)$.
        This follows from \eqref{eq:ck_if} and from the fact that $y \wedge \neg z$ and $\neg y \vee z$ are mutually exclusive formulas encoding the ``if'' condition in \eqref{eq:ck_if}. It is clear that the constructed extended polyrec system is of size polynomial in $i$.
\end{claimproof}

\subparagraph{Sequences $\bd^0,\ldots, \bd^k$}

We define sequences $\bd^0,\ldots, \bd^k$,
where for any $i \geq 0$ we have:
\begin{align}\label{eq:di}
        \begin{split}
        d^i_0 = 0, \quad
        d^i_n = \begin{cases}
                0 & \text{ if } 2^{i} \not \divides \ n \\
                \sem{\psi|_{c^{i+1}_{n-1},\ldots,c^k_{n-1}}} & \text{ otherwise,}
                \end{cases}
        \end{split}
\end{align}
where for a closed formula $\xi$ (i.e., with no free variables) $\sem{\xi}$ is $1$ if $\xi$ is true and $0$ otherwise.
Notice that the formula depends on $\bc^1$, \ldots, $\bc^k$.
Since $\bd^k$ is the zero sequence if, and only if, $\psi$ is false,
it suffices to show that each $\bd^i$ can be defined by an extended polyrec system of polynomial size.

We proceed by induction on $i$.
In the base case $i = 0$,
\begin{align}
        d_n^0 = P_\varphi(c^1_{n-1},\ldots,c^k_{n-1}),
\end{align}
where $P_\varphi$ is the polynomial obtained from the quantifier-free formula $\varphi$ according to the rules in \eqref{eq:boolean}.
(Notice that $P_\varphi$ can be represented by an arithmetic circuit of size polynomial in the size of $\varphi$---this is where we use the conciseness of representation using circuits.)
This fulfills the conditions in~\eqref{eq:di} since,
for $n > 0$, $d_n^0 = 1$ if $(c^1_{n-1},\ldots,c^k_{n-1}) \models \varphi$
and $d_n^0 = 0$ otherwise.

Now, fix $i \ge 1$ and suppose that $\bd^{i-1}$ is defined. The goal is to define $\bd^i$. Recall that if $i$ is odd then $x_i$ is quantified existentially, and otherwise $x_i$ is quantified universally.

\begin{claim}\label{claim:psi}
Let $\circledast_i = \vee$ if $i$ is odd and $\circledast_i =\wedge$ if $i$ is even.
For every $n > 0$ and $0 < i \le k$, we have
\begin{align}
 \begin{split}
d^i_n = \begin{cases}
         d^{i-1}_n \circledast_i d^{i-1}_{n - 2^{i-1}}
                & \text{ if } 2^i \divides n \\
         0 & \text{ otherwise.}
        \end{cases}  
 \end{split}
\end{align}

\end{claim}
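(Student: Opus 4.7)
The claim naturally splits into two cases depending on whether $2^i \mid n$. The case $2^i \nmid n$ is immediate from the definition~\eqref{eq:di}, so the substance lies in the case $2^i \mid n$, where we have to verify that the recursion $d^{i-1}_n \circledast_i d^{i-1}_{n-2^{i-1}}$ correctly computes $\sem{\psi|_{c^{i+1}_{n-1},\ldots,c^k_{n-1}}}$. The plan is to unfold the definitions of $d^{i-1}_n$ and $d^{i-1}_{n-2^{i-1}}$, show that they compute $\sem{\psi|_{1,c^{i+1}_{n-1},\ldots,c^k_{n-1}}}$ and $\sem{\psi|_{0,c^{i+1}_{n-1},\ldots,c^k_{n-1}}}$ respectively, and then invoke the semantics of the outermost quantifier $Q_i$ in $\psi|_{c^{i+1}_{n-1},\ldots,c^k_{n-1}}$, which is $\exists$ if $i$ is odd and $\forall$ if $i$ is even.

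The key step is a bit-level lemma about the bookkeeping sequences. Assuming $2^i \mid n$, write $n = 2^i q$. First, $n - 2^{i-1}$ is also divisible by $2^{i-1}$, so both $d^{i-1}_n$ and $d^{i-1}_{n-2^{i-1}}$ are given by the ``semantic'' branch of~\eqref{eq:di}. Second, by a short computation using~\eqref{eq:ci}, one checks that $c^i_{n-1} = 1$ (since $(n-1) \bmod 2^i = 2^i - 1 \geq 2^{i-1}$) and $c^i_{n-2^{i-1}-1} = 0$ (since $(n - 2^{i-1} - 1) \bmod 2^i = 2^{i-1} - 1 < 2^{i-1}$). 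Third, for every $j > i$, one has $c^j_{n-1} = c^j_{n-2^{i-1}-1}$: the two indices differ by $2^{i-1} < 2^{j-1}$, and writing $n - 1 = 2^i(q-1) + 2^i - 1$ and $n - 2^{i-1} - 1 = 2^i(q-1) + 2^{i-1} - 1$ shows that they share all bits at positions $\geq i$, hence in particular bit $j-1$, so the formula $c^j_m = \lfloor m/2^{j-1}\rfloor \bmod 2$ yields the same value for both.

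Combining these three facts with the definition~\eqref{eq:di}, we obtain
\begin{align*}
  d^{i-1}_n &= \sem{\psi|_{c^i_{n-1},c^{i+1}_{n-1},\ldots,c^k_{n-1}}} = \sem{\psi|_{1,c^{i+1}_{n-1},\ldots,c^k_{n-1}}}, \\
  d^{i-1}_{n-2^{i-1}} &= \sem{\psi|_{c^i_{n-2^{i-1}-1},c^{i+1}_{n-2^{i-1}-1},\ldots,c^k_{n-2^{i-1}-1}}} = \sem{\psi|_{0,c^{i+1}_{n-1},\ldots,c^k_{n-1}}}.
\end{align*}
Now $\psi|_{c^{i+1}_{n-1},\ldots,c^k_{n-1}}$ is of the form $\exists x_i\, \chi$ or $\forall x_i\, \chi$ depending on the parity of $i$, where $\chi$ has $x_i$ free and all later variables instantiated. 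In the existential case the formula is true iff it holds for $x_i=0$ \emph{or} $x_i=1$, matching $d^{i-1}_n \vee d^{i-1}_{n-2^{i-1}}$; in the universal case it is true iff both instantiations hold, matching $d^{i-1}_n \wedge d^{i-1}_{n-2^{i-1}}$. This gives exactly $d^i_n = d^{i-1}_n \circledast_i d^{i-1}_{n-2^{i-1}}$.

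The only moderately delicate step is the bit-level verification that the ``other'' coordinates $c^{i+1}_{\bullet},\ldots,c^k_{\bullet}$ agree at the two relevant indices; once this is in place, everything else is a routine unfolding of the definition of $\psi|_{\cdots}$ together with the semantics of the quantifier $Q_i$. I would keep the bit manipulation as a short standalone subclaim to avoid cluttering the main semantic argument.
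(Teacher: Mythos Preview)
Your proposal is correct and follows essentially the same approach as the paper: both reduce to the case $2^i \mid n$, invoke the semantics of the quantifier $Q_i$ to split $\sem{\psi|_{c^{i+1}_{n-1},\ldots,c^k_{n-1}}}$ into the two instantiations $x_i=0,1$, and then verify via~\eqref{eq:ci} that $c^i_{n-1}=1$, $c^i_{n-2^{i-1}-1}=0$, and $c^j_{n-1}=c^j_{n-2^{i-1}-1}$ for all $j>i$. Your write-up is in fact slightly more explicit about the bit-level justification than the paper's, which is fine.
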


\begin{claimproof}
        We may focus only on the case $2^i \divides n$.
        Since $x_i$ is quantified according to the parity of $i$, we have
        \begin{align*}
                \psi|_{c^{i+1}_{n-1},\ldots,c^k_{n-1}} = \psi|_{0, c^{i+1}_{n-1},\ldots,c^k_{n-1}} \; \circledast_i \; \psi|_{1, c^{i+1}_{n-1},\ldots,c^k_{n-1}}.
        \end{align*}

        We claim that
        \begin{align*}
                d^{i-1}_{n} = \psi|_{1, c^{i+1}_{n-1},\ldots,c^k_{n-1}}
                \ \text{ and }\ 
                d^{i-1}_{n - 2^{i-1}} = \psi|_{0, c^{i+1}_{n-1},\ldots,c^k_{n-1}}.
        \end{align*}
        By \eqref{eq:ci} and the fact that $2^i \divides n$,
        we get $c^i_{n-1} = c^i_{2^{i} - 1} = 1$, which proves the first equation. For the second equation, we observe that $c^{i}_{(n - 1) - 2^{i-1}} = c^{i}_{2^{i-1} - 1} = 0$ and that $c^{j}_{(n-1) - 2^{i-1}} = c^j_{n-1}$ for all $j > i$. The latter assertion  readily follows from $2^i\divides n$ and~\eqref{eq:ci}.
\end{claimproof}

As an immediate consequence of \cref{claim:psi}, we can write
\begin{align}\label{eq:polyS}
        d^i_n = S(d^{i-1}_n,d^{i-1}_{n - 2^{i-1}},c^{i-1}_{n-1},c^{i-1}_n),
\end{align}
where $S(x,y,z,t) = (x \circledast_i y) \wedge (z \wedge \neg t)$
(by recalling that $c^{i-1}_{n-1} = 1, c^{i-1}_n = 0$ holds if, and only if, $2^i \divides n$, where $n > 0$).
%
The issue with this recursive definition is that it requires access to the value $d^{i-1}_{n - 2^{i-1}}$,
which in general is not allowed in a polyrec system for $i \geq 2$ (not even in the extended variant).
This will be addressed in the next section by introducing the last family of recursive sequences.

\subparagraph{Sequences $\bff^0, \dots, \bff^{k-1}$}
For every $1 \leq i \leq k$,
the sequence $\bff^{i-1}$ is defined as
\begin{align*}
f_n^{i-1} = \begin{cases}
                0 & \text{ if } n \bmod 2^i \text{ is less than } 2^{i-1} \\
                d^{i-1}_{m} & \text{ otherwise,}
            \end{cases}
\end{align*}
where $m \le n$ is the unique number such that $n-m < 2^{i-1}$ and $2^{i-1} \divides m$. Thus, $\bff$ is divided into blocks of length $2^{i-1}$ of equal elements, where every other block is either filled with zeros, or its value is determined by the value of an appropriate entry $d^{i-1}_m$.
Observe that in particular, if $2^i \divides n$ then $f_{n-1}^{i-1} = d^{i-1}_{n - 2^{i-1}}$. Thus, intuitively, the sequence $\bff^{i-1}$ is a ``memory'' that allows us to store the relevant value of $\bd^{i-1}$ from $2^{i-1}-1$ steps back.

We now proceed to defining sequences $\bff^0,\ldots,\bff^{k-1}$ using polyrec systems. Observe that
$f_0^{i-1} = 0$ and for $n > 0$, we can write
\begin{align}\label{eq:fn}
        \begin{split}
        f_n^{i-1} =
                \begin{cases}
                        d^{i-1}_n & \text{ if } c^{i-1}_{n-1} = 0 \text{ and } c^{i-1}_n = 1; \\
                        0 & \text{ if } c^{i-1}_{n-1} = 1 \text{ and } c^{i-1}_n = 0; \\
                        f_{n-1}^{i-1} & \text{ otherwise.}
                \end{cases}
        \end{split}
\end{align}
Notice that the value of $f_n^{i-1}$ is copied from $f_{n-1}^{i-1}$
unless $c^{i-1}_{n-1}, c^{i-1}_{n}$ differ.
To conclude, recall from \eqref{eq:ci} that this happens if, and only if, $2^{i-1} \divides n$.

\begin{claim}\label{cl:final}
        For every $i \geq 1$,
        the sequences $\bd^i$ and $\bff^{i-1}$ are definable by extended polyrec systems over $\bbF$ of size polynomial in $i$ and the size of $\varphi$.
\end{claim}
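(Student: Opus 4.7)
The proof will proceed by induction on $i\geq 1$, handling the sequences $\bd^i$ and $\bff^{i-1}$ simultaneously. Before entering the induction, one must first handle $\bd^0$, which is defined directly by $d^0_n = P_\varphi(c^1_{n-1},\ldots,c^k_{n-1})$: combining this with the extended polyrec systems for $\bc^1,\ldots,\bc^k$ from \cref{claim:bc_size}, and noting that $P_\varphi$ admits an arithmetic circuit of size polynomial in $|\varphi|$ via the encoding \eqref{eq:boolean}, yields an extended polyrec system of size polynomial in $|\varphi|$ for $\bd^0$.

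For the inductive step, fix $i\geq 1$ and assume an extended polyrec system of size polynomial in $i-1$ and $|\varphi|$ is already available for $\bd^{i-1}$. The plan is to append first $\bff^{i-1}$ and then $\bd^i$ to this system. The recurrence \eqref{eq:fn} can be rewritten as a short polynomial in $d^{i-1}_n$, $f^{i-1}_{n-1}$, $c^{i-1}_{n-1}$, $c^{i-1}_n$ acting as a three-way selector: since $c^{i-1}_{n-1}$ and $c^{i-1}_n$ are $\{0,1\}$-valued, the three cases in \eqref{eq:fn} can be picked out by constant-size arithmetic combinations such as $d^{i-1}_n\cdot(1-c^{i-1}_{n-1})\cdot c^{i-1}_n + f^{i-1}_{n-1}\cdot\bigl(1-(c^{i-1}_{n-1}-c^{i-1}_n)^2\bigr)$. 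Once $\bff^{i-1}$ is in place, we invoke the identity $f^{i-1}_{n-1} = d^{i-1}_{n-2^{i-1}}$ (valid precisely when $2^i \divides n$) to rewrite \eqref{eq:polyS} as
\[
d^i_n = S(d^{i-1}_n,\, f^{i-1}_{n-1},\, c^{i-1}_{n-1},\, c^{i-1}_n).
\]
When $2^i \nmid n$ the conjunct $z\wedge \neg t$ inside $S$ vanishes, correctly forcing $d^i_n = 0$; when $2^i \divides n$, the sequence $\bff^{i-1}$ supplies precisely the ``far-past'' value $d^{i-1}_{n-2^{i-1}}$ required to evaluate $Q_i x_i$ as in \cref{claim:psi}. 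Both new equations reference current-step values ($d^{i-1}_n$ and $c^{i-1}_n$), so the extended polyrec format is essential; we schedule $\bff^{i-1}$ after $\bd^{i-1}$ and $\bc^{i-1}$, and $\bd^i$ after $\bff^{i-1}$.

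The only genuinely nontrivial point is the use of $\bff^{i-1}$ as a bounded ``memory'' that circumvents the fact that a plain polyrec update cannot reach back $2^{i-1}$ steps, but this has already been established by \eqref{eq:fn} and \cref{claim:psi}, so no serious obstacle remains. The size bound is then routine: each inductive step appends two new recurrences, each encoded by a circuit of constant size, and the initial conditions $d^i_0 = f^{i-1}_0 = 0$ cost nothing. Summed over $k$ levels and combined with \cref{claim:bc_size}, the total system has size polynomial in $i$ and $|\varphi|$, completing the induction.
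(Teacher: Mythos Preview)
Your proposal is correct and follows essentially the same approach as the paper: encode \eqref{eq:fn} by a constant-size polynomial in $c^{i-1}_{n-1},c^{i-1}_n,d^{i-1}_n,f^{i-1}_{n-1}$, then replace $d^{i-1}_{n-2^{i-1}}$ in \eqref{eq:polyS} by the memorised value $f^{i-1}_{n-1}$, and check the ordering constraints of the extended polyrec format. The only cosmetic differences are that the paper writes the selector $R$ via the Boolean encoding \eqref{eq:boolean} rather than your explicit arithmetic form, and that the paper schedules all $\bc$'s, then all $\bd$'s, then all $\bff$'s globally (your interleaved ordering also works, and your extra constraint ``$\bd^i$ after $\bff^{i-1}$'' is harmless though unnecessary, since $\bd^i$ only uses the previous-step value $f^{i-1}_{n-1}$).
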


\begin{claimproof}
        Using \eqref{eq:fn}, we may write $\bff^{i-1}$ as an extended polyrec sequence
        $f_0^{i-1} = 0$ and, for $n > 0$,
        \begin{align}\label{eq:fi}
                f^{i-1}_n = R(c^{i-1}_{n-1}, c^{i-1}_{n}, d^{i-1}_n, f^{i-1}_{n-1}),
        \end{align}
        where $$R(x,y,z,t) = (z \wedge (\neg x \wedge y)) \vee (t \wedge ((x \wedge y) \vee (\neg x \wedge \neg y)).$$
        In turn, this allows us to rewrite \eqref{eq:polyS} as
        \begin{align}\label{eq:di2}
                d^i_n = S(d^{i-1}_n,f^{i-1}_{n-1} ,c^{i-1}_{n-1},c^{i-1}_n),
        \end{align}
        where $S$ was defined in \eqref{eq:polyS}.
        Note that \eqref{eq:fi} and \eqref{eq:di2} are in the extended polyrec format
        provided that we write the equations for the $\bff^{i}$'s after the equations for the $\bd^{i}$'s,
        and the latter after the equations for $\bc^{i}$'s (in order to avoid creating a cyclic dependency).
        In other words, the final extended polyrec system consists of equations \eqref{eq:ck_poly},
        followed by~\eqref{eq:di2}, and followed by~\eqref{eq:fi}, where each set of equations is numbered naturally according to the indices of sequences.


        The involved polynomials $P_\varphi$, $R$ and $S$ are all of size polynomial in the input size when represented as arithmetic circuits ($R$ and $S$ are even of constant size),
        and we have a polynomial number of equations.
        Thus, the definition above is an extended polyrec system of polynomial size.
\end{claimproof}

As discussed, \cref{cl:final} finishes the proof of \cref{theorem:pspace}.

\medskip

In the end, we discuss the Skolem problem: given a sequence $\bu$ to determine whether there is $n$ such that $u_n = 0$. This problem was extensively studied for the class of linear recursive sequences (see \eg~\cite{OuaknineWorrell:SIGLOG:2015}). For linear recursive sequences it is open whether the Skolem problem is decidable, but only $\NPTIME$-hardness is known~\cite[Corollary 2.1]{BlondelPortier:LAA:2002}.
For polyrec sequences, decidability of the Skolem problem is also open, but we can improve the lower bound.

\begin{corollary}
The Skolem problem is $\PSPACE$-hard for polyrec sequences.
\end{corollary}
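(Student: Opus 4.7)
The plan is to reduce QBF validity to the Skolem problem by augmenting the polyrec sequence $\bd^k$ constructed in the proof of \cref{theorem:pspace}. Recall that this sequence takes only the values~$0$ and~$1$, is identically zero iff the input formula $\psi$ is invalid, and (when $\psi$ is valid) attains $d^k_{2^k}=1$; see~\eqref{eq:di} together with \cref{claim:psi}.

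The key step is to introduce one new sequence $\bw$ acting as a ``latch'' on~$\bd^k$, governed by
\[
    w_0 = 1, \qquad w_{n+1} = w_n \cdot (1 - d^k_n).
\]
Since $d^k_n \in \set{0,1}$ at every step, $\bw$ starts at~$1$ and drops to~$0$ permanently the first time $\bd^k$ takes the value~$1$, whereas if $\bd^k$ is identically zero then $\bw$ stays equal to~$1$ forever. Consequently, $\bw$ has a zero if and only if there exists $m$ with $d^k_m = 1$, which by the previous construction is equivalent to the validity of~$\psi$.

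It remains to check that adjoining the equation for $\bw$ preserves the polyrec format and polynomial size. The equation for $w_{n+1}$ uses only values at step~$n$, so it can be appended to the (extended) polyrec system from the proof of \cref{theorem:pspace}; applying \cref{lem:ext-elimination} then yields, in polynomial time, an ordinary polyrec system of polynomial size that defines~$\bw$. Since QBF validity is \PSPACE-complete, this gives \PSPACE-hardness of the Skolem problem for polyrec. The only delicate point is that the reduction must work uniformly over any fixed field~$\bbF$: this causes no trouble because $d^k_n$ only ever takes the values~$0$ and~$1$, so the multiplicative recurrence for $\bw$ behaves combinatorially and is insensitive to the characteristic of~$\bbF$.
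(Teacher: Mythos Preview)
Your argument is correct and rests on the same observation as the paper: since $\bd^k$ is $\set{0,1}$-valued, nonzeroness of $\bd^k$ (equivalently, validity of $\psi$) can be rephrased as the Skolem problem for a companion polyrec sequence. The paper's implementation is more direct, though: it simply takes $v_n = 1 - d^k_n$, so that $v$ has a zero iff $\bd^k$ is not identically zero. Your latch $w_{n+1} = w_n(1-d^k_n)$ achieves the same effect but adds an unnecessary layer; the pointwise complement already suffices because $d^k_n\in\set{0,1}$ guarantees $v_n=0 \iff d^k_n=1$.
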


\begin{proof}
Notice that in the proof of \cref{theorem:pspace} we define a system of sequences over $\set{0,1}$. It remains to observe that for such sequences the zeroness problem and the Skolem problem reduce to each other. Indeed, the nonzeroness problem of a sequence $\bu$ over $\set{0,1}$ is equivalent to the Skolem problem of $\bv$ defined as $v_n = 1 - u_n$.
\end{proof}

We conclude this section by noting that the reduction from \QBF that we have presented
produces a polyrec sequence which is identically zero if and only if the first exponentially many initial values thereof are zero.
We are not aware of examples requiring longer witnesses of zeroness for polyrec sequences.

\end{document}